\newtheorem{theorem}{Theorem}[section]
\newtheorem{defn}[theorem]{Definition}
\newtheorem{lemma}[theorem]{Lemma}
\newtheorem{coro}[theorem]{Corollary}
\newtheorem{prop-def}{Proposition-Definition}[section]
\newcommand{\nc}{\newcommand}
\newcommand{\delete}[1]{}
\nc{\mlabel}[1]{\label{#1}}  
\nc{\mcite}[1]{\cite{#1}}  
\nc{\mref}[1]{\ref{#1}}  
\nc{\mbibitem}[1]{\bibitem{#1}} 
\nc{\mlabel}[1]{\label{#1}  
{\hfill \hspace{1cm}{\bf{{\ }\hfill(#1)}}}}
\nc{\mcite}[1]{\cite{#1}{{\bf{{\ }(#1)}}}}  
\nc{\mref}[1]{\ref{#1}{{\bf{{\ }(#1)}}}}  
\nc{\mbibitem}[1]{\bibitem[\bf #1]{#1}} 
\nc{\bfk}{\mathbf{k}}
\nc{\Der}{\mathrm{Der}}
\nc{\Ker}{\mathrm{Ker}}
\begin{document}

\title{Homogeneous Rota-Baxter operators on $3$-Lie algebra $A_{\omega}$ }

\author{RuiPu  Bai}
\address{College of Mathematics and Information  Science,
Hebei University, Baoding 071002, China} \email{bairuipu@hbu.edu.cn}

\author{Yinghua Zhang}
\address{College of Mathematics and Information  Science,
Hebei University, Baoding 071002, China} \email{zhangyinghua1234@163.com}

\date{\today}

\begin{abstract} In the paper we study homogeneous  Rota-Baxter operators with weight zero on the infinite dimensional simple $3$-Lie algebra $A_{\omega}$ over a field $F$ ( $ch F=0$ ) which is realized by an associative commutative
algebra $A$ and  a derivation $\Delta$ and an involution $\omega$ ( Lemma \mref{lem:rbd3} ). A homogeneous  Rota-Baxter operator on $A_{\omega}$ is a linear map $R$ of $A_{\omega}$ satisfying $R(L_m)=f(m)L_m$ for all generators of $A_{\omega}$, where $f : A_{\omega} \rightarrow F$. We proved that $R$ is a  homogeneous  Rota-Baxter operator on $A_{\omega}$ if and only if $R$ is the one of the five possibilities  $R_{0_1}$, $R_{0_2}$,$R_{0_3}$,$R_{0_4}$ and  $R_{0_5}$, which are described in Theorem  \mref{thm:thm1}, \mref{thm:thm4},
\mref{thm:thm01}, \mref{thm:thm03} and \mref{thm:thm04}.    By the five homogeneous  Rota-Baxter operators $R_{0_i}$, we construct new $3$-Lie algebras $(A, [ , , ]_i)$ for $1\leq i\leq 5$,  such that
$R_{0_i}$ is the homogeneous  Rota-Baxter operator on $3$-Lie algebra $(A, [ , , ]_i)$, respectively.

\end{abstract}

\subjclass[2010]{17B05, 17D99.}

\keywords{ $3$-Lie algebra, homogeneous  Rota-Baxter operator, Rota-Baxter $3$-algebra.}

\maketitle



\allowdisplaybreaks

\section{Introduction}

Rota-Baxter operators were originally defined on associative algebras by G. Baxter to
solve an analytic formula in probability ~\cite{Ba} and  populated by
the work of Cartier and Rota~\mcite{Ca,Ro1,Ro2}. They have been closely related to many fields in mathematics and mathematical physics.
Rota-Baxter algebras have played an important role in the Hopf algebra approach of renormalization of perturbative quantum field theory of Connes and Kreimer~\mcite{CK,EGK,EGM}, as well as in the application of the renormalization method in solving divergent problems in number theory~\mcite{GZ,MP}.

 Rota-Baxter operators on a Lie algebra are an operator form of the classical Yang-Baxter equations and contribute to the study of integrable systems~\cite{Bai,BGN,BGN2}.
 Semenov-Tian-Shansky¡¯s fundamental work \cite{STS} shows
that a Rota-Baxter operator of weight 0 on a Lie algebra is exactly the operator form
of the classical Yang-Baxter equation (CYBE), which was regarded as a ¡°classical limit¡±
of the quantum Yang-Baxter equation \cite{BCM3B}. Whereas the latter is also an important topic
in many fields such as symplectic geometry, integrable systems, quantum groups and
quantum field theory~\cite{Ag2,BBGN,EGK,Guw,Gub,GK1, GK3, HuiB, GZ,Ro1,Ro2}.

 Rota-Baxter $n$-algebras and differential $n$-algebras were first introduced in~\cite{BGL11}, they are the generalization of Rota-Baxter algebras to the multiple algebraic systems.  We know that
  $n$-Lie algebras~\cite{F}  are a type of multiple  algebraic
systems appearing in many fields of mathematics and mathematical
physics~\cite{N, T, BL,HHM, HIM, HCK,G,P}. Especially,  $3$-Lie algebras and metric $3$-Lie algebras are applied to the study of the supersymmetry and
gauge symmetry transformations of the world-volume theory of
multiple coincident M2-branes; the Bagger-Lambert theory has a novel
local gauge symmetry which is based on a metric $3$-Lie algebra; the $n$-Jacobi identity in $n$-Lie algebras can be regarded as
a generalized Plucker relation in the physics literature.
The theory of $n$-Lie algebras has been widely studied~\cite{K,L,BSZ,BSZ1, BBW, BHB, AI}. For the recent years, the most interesting work on the structure of $n$-Lie algebras is
the realization of $n$-Lie algebras ( $n\geq 3 $ ) from well know algebras, for example,  from Lie algebras, associative algebras, commutative associative algebras, cubic matrices, etc. \cite{BBW, BW22, BLZ33, Pozhi, Pozhi2, Pozhi3}.

Authors in paper  ~\cite{BGL11} provided the Rota-Baxter operator on $n$-Lie algebras and  studied  the structure of Rota-Baxter $3$-Lie algebras,  and also gave the method to realize Rota-Baxter $3$-Lie algebras from Rota-Baxter $3$-Lie algebras,  Rota-Baxter Lie algebras, Rota-Baxter pre-Lie algebras and Rota-Baxter commutative associative algebras and derivations.

In this paper we investigate a class of Rota-Baxter operators with weight zero on the simple $3$-Lie algebra $A_{\omega}$, which is constructed from a commutative
associative
algebra $A$ and a derivation $\Delta$ and an involution $\omega$ which satisfies $\Delta\omega+\omega\Delta=0$ \cite{BW22}.
This on one hand further studies the structures of the simple Rota-Baxter $3$-Lie algebra,
and on the other hand provides a rich source of examples for Rota-Baxter $3$-Lie algebras.

The article is organized as follows. Section 2 describes concepts of Rota-Baxter operators
with weights for general $n$-ary algebras and some results which are used in the paper.  In Section 3 is devoted to the
homogeneous  Rota-Baxter operators on $A_{\omega}$  with weight zero. At last of the paper, new $3$-Lie algebras are constructed by the homogeneous  Rota-Baxter operators on $A_{\omega}$.

In this paper, we suppose that $F$ is a field of characteristic zero, and $Z$ is the set of integer numbers.

\section{preliminary}
\mlabel{sec:rbn}

An {\bf $n$-Lie algebra} \cite{F} is a vector space $A$ over a field $F$ endowed with an $n$-ary multi-linear skew-symmetric operation
$[x_1, \cdots, x_n]$ satisfying the $n$-Jacobi identity
\begin{equation}
  [[x_1, \cdots, x_n], y_2, \cdots, y_n]=\sum_{i=1}^n[x_1, \cdots, [ x_i, y_2, \cdots, y_n], \cdots,
  x_n].
\mlabel{eq:nlie}
\end{equation}
In particular, a {\bf 3-Lie algebra} is a vector space $A$ endowed with a ternary multi-linear skew-symmetric operation
satisfying
for all $x_1,x_2,x_3, y_2, y_3\in A$.
\begin{equation}
[[x_1,x_2,x_3],y_2,y_3]=[[x_1,y_2,y_3],x_2,x_3] +[[x_2,y_2,y_3],x_3,x_1]+[[x_3,y_2,y_3],x_1,x_2].
\end{equation}

\begin{defn}
\mlabel{de:nlie}  Let $\lambda\in F$ be fixed.
\begin{enumerate}
\item
An {\bf $n$-(nonassociative) algebra} over a field $F$ is a
pair $(A, \langle , \cdots, \rangle)$ consisting of a vector space
$A$ over $F$ and a multilinear
 multiplication
 $$\langle , \cdots, \rangle: A^{\otimes n} \rightarrow A.$$
\item
A {\bf derivation of weight $\lambda$} on an $n$-algebra $(A, \langle , \cdots, \rangle)$ is a
linear map $d: A\rightarrow A$ such that,
\begin{equation}
d(\langle x_1, \cdots,x_n\rangle) =\sum\limits_{\emptyset \neq I\subseteq [n]}\lambda^{|I|-1} \langle \check{d}(x_1),\cdots, \check{d}(x_i),\cdots, \check{d}(x_n)\rangle,
\mlabel{eq:dern}
\end{equation}
where
$\check{d}(x_i):=\check{d}_I(x_i):=\left \{\begin{array}{ll} d(x_i), & i\in I, \\ x_i, & i\not\in I\end{array} \right . \text{ for all } x_1,\cdots,x_n\in A.$
Then $A$ is called a {\bf differential $n$-algebra of weight $\lambda$.}
In particular, a {\bf differential $3$-algebra of weight $\lambda$} is a $3$-algebra $(A,\langle , , \rangle)$ with a linear map $d: A\to A$ such that
\begin{eqnarray}
d(\langle x_1,x_2,x_3\rangle )&=& \langle d(x_1),x_2,x_3\rangle +\langle x_1,d(x_2),x_3\rangle +\langle x_1,x_2,d(x_3)\rangle \notag \\
&&
+\lambda \langle d(x_1),d(x_2),x_3\rangle
+\lambda \langle d(x_1),x_2,d(x_3)\rangle
+\lambda \langle x_1,d(x_2),d(x_3)\rangle\\
&&+\lambda^2 \langle d(x_1),d(x_2),d(x_3)\rangle.
\notag
\end{eqnarray}
\item
A {\bf Rota-Baxter operator of weight $\lambda$} on $(A,\langle,\cdots,\rangle)$ is a linear map $R: A\rightarrow A$ such that
\begin{equation}
\langle R(x_1), \cdots, R(x_n)\rangle
=R\left( \sum\limits_{\emptyset \neq I\subseteq [n]}\lambda^{|I|-1} \langle \hat{R}(x_1), \cdots, \hat{R}(x_i), \cdots, \hat{R}(x_{n})\rangle\right),
\mlabel{eq:rbn}
\end{equation}
where
$\hat{R}(x_i):=\hat{R}_I(x_i):=\left\{\begin{array}{ll} x_i, & i\in I, \\ R(x_i), & i\not\in I \end{array}\right. \text{ for all } x_1,\cdots, x_n\in A.
$
Then $A$ is called a {\bf Rota-Baxter $n$-algebra of weight $\lambda$}.
In particular, a {\bf Rota-Baxter $3$-algebra} is a $3$-algebra $(A,\langle , , \rangle)$ with a linear map $P: A\to A$ such that
\begin{eqnarray}
\langle R(x_1), R(x_2), R(x_3)\rangle
&=& R\Big(
\langle R(x_1), R(x_2),x_3\rangle +\langle R(x_1),x_2, R(x_3)\rangle +\langle x_1, R(x_2), R(x_3)\rangle \notag \\
&&
+\lambda \langle R(x_1),x_2,x_3\rangle
+\lambda \langle x_1, R(x_2),x_3\rangle
+\lambda \langle x_1,x_2, R(x_3)\rangle
\mlabel{eq:rb3de}\\
&&
+\lambda^2 \langle x_1,x_2,x_3)\rangle\Big).
\notag
\end{eqnarray}
\end{enumerate}
\end{defn}

\begin{lemma}\cite{BGL11}
Let $(A, \langle\ , \cdots, \rangle)$ be an $n$-algebra over $F$.
An invertible linear mapping $P: A\rightarrow A$ is a Rota-Baxter
operator of weight $\lambda$ on $A$ if and only if $P^{-1}$ is a
differential operator of weight $\lambda$  on $A$.
 \mlabel{lem:rbd1}
\end{lemma}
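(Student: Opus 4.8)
The plan is to move between the two identities by the bijective substitution induced by $P$. Since $P$ is invertible, the Rota-Baxter identity \eqref{eq:rbn} holding for all $x_1,\dots,x_n\in A$ is equivalent to the identity obtained from it after replacing each $x_i$ by $P^{-1}(y_i)$ and then letting $y_1,\dots,y_n$ range over all of $A$; the point is that this rewriting turns \eqref{eq:rbn} into \eqref{eq:dern} for the map $d:=P^{-1}$, and vice versa.

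Concretely, write $d=P^{-1}$ and substitute $x_i=d(y_i)$ in \eqref{eq:rbn}. The left-hand side becomes $\langle P(d(y_1)),\dots,P(d(y_n))\rangle=\langle y_1,\dots,y_n\rangle$. For a fixed nonempty $I\subseteq[n]$, the $i$-th argument on the right-hand side is $\hat P_I(d(y_i))$, which equals $d(y_i)$ when $i\in I$ (since $\hat P_I$ acts as the identity on the slots in $I$) and equals $P(d(y_i))=y_i$ when $i\notin I$; that is, $\hat P_I(d(y_i))=\check d_I(y_i)$ with the \emph{same} index set $I$. Hence \eqref{eq:rbn} becomes
\begin{equation}
\langle y_1,\dots,y_n\rangle = P\Big(\sum_{\emptyset\neq I\subseteq[n]}\lambda^{|I|-1}\langle \check d(y_1),\dots,\check d(y_n)\rangle\Big)
\end{equation}
for all $y_i\in A$. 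Applying $P^{-1}=d$ to both sides yields exactly \eqref{eq:dern}, so $d$ is a differential operator of weight $\lambda$. The converse is the same computation read backwards: if $d=P^{-1}$ satisfies \eqref{eq:dern}, substitute $y_i=P(x_i)$ (so $d(y_i)=x_i$), use $\check d_I(P(x_i))=\hat P_I(x_i)$ with the same $I$, and apply $P$ to both sides of \eqref{eq:dern} to recover \eqref{eq:rbn}.

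There is no real obstacle here beyond bookkeeping: the only subtlety is to check that, after the substitution, the operator $\hat P_I\circ d$ acting slot-by-slot coincides with $\check d_I$ for the \emph{same} subset $I$ (no passage to the complementary subset is needed), which is what makes the two $(2^n-1)$-term sums in \eqref{eq:rbn} and \eqref{eq:dern} match term by term. Invertibility of $P$ is used twice: to guarantee that the substitution $x_i=d(y_i)$ (resp. $y_i=P(x_i)$) is a bijection on $n$-tuples, so that quantifying over all tuples is preserved, and to apply $P^{-1}$ (resp. $P$) to both sides at the final step.
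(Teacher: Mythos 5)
Your proof is correct: the substitution $x_i=P^{-1}(y_i)$ does convert Eq.~(\ref{eq:rbn}) into Eq.~(\ref{eq:dern}) term by term for the same index set $I$, and invertibility is used exactly where you say it is. The paper itself gives no proof (it cites \cite{BGL11}), but your argument is the standard one for this equivalence, so there is nothing further to compare.
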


\begin{lemma}
Let $(A, \langle\ , \cdots, \rangle, R)$ be a Rota-Baxter $n$-algebra over $F$ with weight $ 0$.
Then for all $\lambda\in F$, $\lambda\neq 0$,  $(A, \langle\ , \cdots, \rangle, \lambda R)$ is a Rota-Baxter $n$-algebra with weight $0$.
\mlabel{lem:rbd2}
\end{lemma}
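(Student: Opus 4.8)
The plan is to show directly from the weight-zero Rota-Baxter identity (\mref{eq:rbn}) that scaling $R$ by a nonzero constant $\lambda$ preserves the identity. Write $R' = \lambda R$. I would substitute $R'$ into the left-hand side and right-hand side of (\mref{eq:rbn}) with weight $0$ and compare. The key observation is that when $\lambda_{\text{weight}} = 0$ in (\mref{eq:rbn}), only the term with $I = \emptyset$ survives on the right (since $\lambda_{\text{weight}}^{|I|-1} = 0$ unless $|I| = 1$; wait — more precisely, the surviving terms are those with $|I| = 1$, so that $\lambda^{|I|-1} = \lambda^0 = 1$), giving the compact form
\begin{equation}
\langle R(x_1), \cdots, R(x_n)\rangle = R\Big( \sum_{i=1}^n \langle R(x_1), \cdots, R(x_{i-1}), x_i, R(x_{i+1}), \cdots, R(x_n)\rangle \Big).
\mlabel{eq:rbn0}
\end{equation}

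Next I would count powers of $\lambda$. On the left-hand side, $\langle R'(x_1), \cdots, R'(x_n)\rangle = \lambda^n \langle R(x_1), \cdots, R(x_n)\rangle$. On the right-hand side, each summand $\langle R'(x_1), \cdots, x_i, \cdots, R'(x_n)\rangle$ has exactly $n-1$ occurrences of $R'$, contributing $\lambda^{n-1}$, and then the outer $R' = \lambda R$ contributes one more $\lambda$, for a total of $\lambda^n$ as well. Hence both sides of the candidate identity for $R'$ equal $\lambda^n$ times the corresponding sides of (\mref{eq:rbn0}) for $R$, and since $R$ satisfies (\mref{eq:rbn0}), so does $R'$ after dividing by $\lambda^n \neq 0$. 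This establishes that $(A, \langle\,,\cdots,\rangle, \lambda R)$ is a Rota-Baxter $n$-algebra of weight $0$.

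**The main (and only) subtlety** is bookkeeping: making sure the weight-zero reduction of (\mref{eq:rbn}) is stated correctly — that the surviving terms are exactly those indexed by singletons $I = \{i\}$, on which $\hat R_I$ leaves $x_i$ untouched and applies $R$ to all other arguments — and then checking that the power of $\lambda$ is the same ($\lambda^n$) on both sides. There is no real obstacle here; the statement is essentially a homogeneity remark, and the proof is a one-line scaling argument once (\mref{eq:rbn0}) is written out. I would present it for general $n$ and note that the $n = 3$ case (\mref{eq:rb3de}) specializes accordingly.
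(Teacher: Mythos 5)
Your proposal is correct and is exactly the direct verification the paper intends (its own proof is just the one-line remark that the result ``follows from Eq.~(\mref{eq:rbn}) directly''): in the weight-zero identity only the singleton subsets $I=\{i\}$ survive, and both sides scale by $\lambda^{n}$ under $R\mapsto\lambda R$. Your brief initial slip about $I=\emptyset$ is self-corrected in the text, so the argument as finally stated is sound.
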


\begin{proof}

The result follows from Eq. (\mref{eq:rbn}), directly.

\end{proof}

\begin{lemma}\cite{BW22} Let $A$ be a vector space with a basis $\{ L_n~|~ n\in Z \}$ over  field $F$. Then $A$ is a simple $3$-Lie algebra
 in the multiplication
\begin{equation}
{[}L_{l},L_{m},L_{n}]=\begin{vmatrix}
(-1)^l&(-1)^m&
   (-1)^n \\
1&1&1  \\
l& m&
   n \\
\end{vmatrix}L_{l+m+n-1}, ~~\mbox{for all } ~~l, m, n\in Z.
\mlabel{eq:defthlmn}
\end{equation}
\mlabel{lem:rbd3}
\end{lemma}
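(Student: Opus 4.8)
The plan is to verify that the ternary bracket defined by Eq.~(\mref{eq:defthlmn}) makes $A$ a $3$-Lie algebra and then to establish simplicity. First I would note that skew-symmetry is immediate: the right-hand side is a $3\times 3$ determinant in the columns indexed by $l,m,n$, so interchanging any two of $l,m,n$ permutes two columns of the determinant and changes the sign, while the subscript $l+m+n-1$ of $L_{l+m+n-1}$ is symmetric. Multilinearity over $F$ is clear since the structure constants are just scalars. The substantive point is the fundamental ($3$-Jacobi) identity~(\mref{eq:nlie}) for $n=3$. The standard device here is to realize the bracket through a commutative associative algebra and commuting derivations: set $A$ to be the span of $L_n$ with $L_l\cdot L_m := L_{l+m-1}$ (so $A\cong F[t,t^{-1}]$ after the shift $L_n\leftrightarrow t^{n-1}$, a commutative associative algebra), and introduce the two linear maps $D_1(L_n)=n L_n$ and $D_2(L_n)=(-1)^n L_n$. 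One checks directly that $D_1$ is a derivation of this product, and that $D_2$ is an \emph{involutive automorphism} (indeed $D_2(L_l\cdot L_m)=(-1)^{l+m-1}L_{l+m-1}=-D_2(L_l)\cdot D_2(L_m)$, so $-D_2$ is a derivation-like twist); expanding the $2\times 2$ minors of the determinant in~(\mref{eq:defthlmn}) shows that
\[
[L_l,L_m,L_n] \;=\; \bigl(D_1(L_m)D_2(L_l)-D_1(L_l)D_2(L_m)\bigr)\cdot L_n \;+\;(\text{cyclic in }l,m,n),
\]
i.e. the bracket has the Jacobian/Nambu form built from the commutative product and the pair $D_1,D_2$. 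From the general fact that such a construction (a commutative associative algebra with a derivation $\Delta$ and an involution $\omega$ satisfying $\Delta\omega+\omega\Delta=0$, which is exactly the relation $D_1 D_2 + D_2 D_1 = 0$ here) yields a $3$-Lie algebra — this is the construction of \cite{BW22} referenced in the abstract as $A_\omega$ — the $3$-Jacobi identity follows. Alternatively, one can verify~(\mref{eq:nlie}) by brute force: plug in basis elements, expand both sides as determinants times $L_{(\text{sum})}$, and match coefficients using the Plücker-type relation among $3\times 3$ minors; this is routine but tedious.

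For simplicity, suppose $I\subseteq A$ is a nonzero ideal. Pick $0\neq x=\sum_{n} a_n L_n \in I$ with finitely many $a_n\neq 0$, and let $L_p$ occur in $x$ with $a_p\neq 0$. The idea is to use well-chosen bracket operations $[\,-\,,L_m,L_n]$ to "isolate" and then "translate" basis vectors. Computing $[x,L_m,L_n]=\sum_n a_n\,c(n,m,n')\,L_{n+m+n'-1}$ where $c(n,m,n')$ is the determinant in~(\mref{eq:defthlmn}); by choosing $m,n'$ with opposite parities and a suitable spread one arranges that the scalar $c(p,m,n')\neq 0$ while, after finitely many such operations (a Vandermonde/van der Monde-type argument on the distinct shifted indices appearing in $x$), one extracts a single nonzero multiple of some $L_q\in I$. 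Once $L_q\in I$ for one $q$, bracketing $L_q$ against pairs $L_a,L_b$ with $a\not\equiv b \pmod 2$ and $a\neq b$ gives nonzero multiples of $L_{q+a+b-1}$, and since $q+a+b-1$ ranges over all of $Z$ as $a,b$ vary subject to those mild constraints, we get $L_n\in I$ for every $n\in Z$, hence $I=A$. Thus $A$ has no proper nonzero ideal; since $\dim A=\infty>1$ and the bracket is not identically zero (e.g. $[L_0,L_1,L_2]\neq 0$, as the corresponding determinant is nonzero), $A$ is simple.

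The main obstacle is the $3$-Jacobi identity: either invoking the $A_\omega$ construction of \cite{BW22} cleanly (which is the intended route, since the paper's whole setup is phrased in terms of that realization) or, if one wants a self-contained check, organizing the determinant bookkeeping so that the Plücker relation among the relevant minors is transparent rather than a morass of sign-tracking. The simplicity argument is comparatively soft, the only delicate point being the Vandermonde-type step that guarantees one can pass from an arbitrary finite combination in $I$ down to a single basis vector; this needs the characteristic-zero hypothesis (so that the nonvanishing determinants stay nonzero and the interpolation works) and the fact that for any two integers of opposite parity the leading $2\times2$ minor $(-1)^m n-(-1)^n m$ does not vanish.
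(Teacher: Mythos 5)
The paper does not prove this lemma at all --- it is quoted verbatim from \cite{BW22}, so there is no in-paper argument to compare against. Your plan follows the right strategy (realize the bracket through the commutative-algebra/derivation/involution construction of \cite{BW22} to get the $3$-Jacobi identity, then argue simplicity by hand), but the realization you write down does not actually work. With your normalization $L_l\cdot L_m=L_{l+m-1}$ and $D_1(L_n)=nL_n$, the map $D_1$ is \emph{not} a derivation: $D_1(L_l\cdot L_m)=(l+m-1)L_{l+m-1}$ while $D_1(L_l)\cdot L_m+L_l\cdot D_1(L_m)=(l+m)L_{l+m-1}$. Likewise $D_2$ fails to be an algebra involution of that product (as you yourself observe, only $-D_2$ is multiplicative), which is exactly the hypothesis the \cite{BW22} construction needs; and your displayed expansion lands on $L_{l+m+n-2}$, not $L_{l+m+n-1}$. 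The correct realization is $L_n\leftrightarrow t^n$ in $F[t,t^{-1}]$ with the \emph{unshifted} product $L_lL_m=L_{l+m}$, $\Delta=d/dt$ (so $\Delta(L_n)=nL_{n-1}$, which supplies the missing $-1$ in the subscript), and $\omega(L_n)=(-1)^nL_n$; then $\Delta\omega+\omega\Delta=0$, $\omega$ is a genuine algebra involution, and the determinant with rows $\omega,\mathrm{id},\Delta$ reproduces $D(l,m,n)L_{l+m+n-1}$ exactly.

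Two further points on the simplicity half. First, the isolation step is only sketched; the cleanest version is to note that $[\,\cdot\,,L_0,L_1]$ acts diagonally on the basis (it preserves each subscript, with eigenvalue $2p$ for $p$ even and $2(p-1)$ for $p$ odd), so polynomial functional calculus in this single operator already splits any $x\in I$ into its components in the two-dimensional eigenspaces $\mathrm{span}(L_{2k},L_{2k+1})$, after which one more bracket separates the pair; this is where characteristic zero enters. Second, your translation step has a parity gap: if $a\not\equiv b\pmod 2$ then $a+b$ is odd and $q+a+b-1\equiv q\pmod 2$, so such brackets only reach indices of the same parity as $q$. To reach the other parity class you must also bracket $L_q$ (for $q$ even, say) against two \emph{odd} indices $a\neq b$, which is permitted since $D(q,a,b)\neq 0$ by Lemma~\mref{lem:rbd4} as long as not all three indices share a parity and no two coincide. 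With these repairs the argument goes through.
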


In the following, the $3$-Lie algebra $A$ in Lemma \mref{lem:rbd3} is  denoted by ${\bf A_{\omega}}$, and  the determinant

\vspace{2mm}  $\begin{vmatrix}
(-1)^l&(-1)^m&
   (-1)^n \\
1&1&1  \\
l& m&
   n \\
\end{vmatrix}$
is  denoted by $D(l, m, n)$.

\begin{lemma} The determinant
 $D(l, m, n)=0$  if and only if

 \vspace{2mm} $(l-m)(l-n)(m-n)=0$, or $l=2k+1, m=2s+1, n=2t+1$,  or $l=2k, m=2s, n=2t$,  for all $k, s, t\in Z.$
\mlabel{lem:rbd4}
 \end{lemma}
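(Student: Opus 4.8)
The determinant $D(l,m,n)$ expands along the first row as
\[
D(l,m,n) = (-1)^l(n-m) + (-1)^m(l-n) + (-1)^n(m-l).
\]
My plan is to analyze this expression by splitting into cases according to the parities of $l$, $m$, $n$. There are, up to the obvious symmetry of the determinant under permutations of $(l,m,n)$ (which only changes the sign), essentially two parity patterns to consider: all three the same parity, or exactly two equal in parity.

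First I would treat the homogeneous-parity cases. If $l\equiv m\equiv n\pmod 2$, then $(-1)^l=(-1)^m=(-1)^n=:\varepsilon$, so $D(l,m,n)=\varepsilon\big[(n-m)+(l-n)+(m-l)\big]=0$. This immediately gives the two sufficiency directions for "$l=2k+1,m=2s+1,n=2t+1$" and "$l=2k,m=2s,n=2t$". The sufficiency of $(l-m)(l-n)(m-n)=0$ is just as easy: if two of the indices coincide the determinant has two equal columns, hence vanishes. So the "if" direction is routine.

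For the "only if" direction, the remaining case is exactly two of $l,m,n$ of one parity and the third of the other; by the permutation symmetry I may assume $l\equiv m\pmod 2$ and $n\not\equiv l\pmod 2$. Then $(-1)^l=(-1)^m=:\varepsilon$ and $(-1)^n=-\varepsilon$, so
\[
D(l,m,n) = \varepsilon(n-m) + \varepsilon(l-n) - \varepsilon(m-l) = \varepsilon\big[(n-m)+(l-n)-(m-l)\big] = 2\varepsilon(l-m).
\]
Since $l\equiv m\pmod 2$ we have $l-m$ even, but more to the point $D(l,m,n)=2\varepsilon(l-m)=0$ forces $l=m$. Thus in this mixed-parity case $D=0$ can only happen when $l=m$, i.e. $(l-m)(l-n)(m-n)=0$. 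Combining all cases, $D(l,m,n)=0$ implies either two indices are equal or all three indices share the same parity, which is precisely the stated conclusion.

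I do not anticipate a genuine obstacle here; the only thing requiring a little care is organizing the parity case split so as to exploit the $S_3$-symmetry of the determinant and avoid writing out all $2^3$ sign patterns. One should also note the slight redundancy in the statement: when all three indices have the same parity it is automatic that $D=0$ regardless of whether any coincide, so the three listed alternatives overlap, but that does not affect the logical equivalence being proved.
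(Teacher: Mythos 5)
Your proof is correct, and it is simply a written-out version of the ``direct computation'' that the paper's one-line proof alludes to: expanding $D(l,m,n)=(-1)^l(n-m)+(-1)^m(l-n)+(-1)^n(m-l)$ and splitting on parities, with the key observation that in the mixed-parity case (say $l\equiv m\not\equiv n \pmod 2$) the determinant collapses to $2(-1)^l(l-m)$. The case analysis and the use of the permutation symmetry are both sound, so there is nothing to add.
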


\begin{proof}
The result follows from a direct computation.
\end{proof}

\section{Homogeneous Rota-Baxter operators with  weight $0$ on $3$-Lie algebra $A_{\omega}$}
\mlabel{sec:rbl3rbl}

In this section we discuss   Rota-Baxter operators with weight $0$ on the $3$-Lie algebra $A_{\omega}$.

By Definition~\mref{de:nlie}, if $(A, [, , ], R)$ is a Rota-Baxter $3$-Lie
algebra of weight $\lambda=0$. Then the linear map $R: A\rightarrow
A$ satisfies that for all $x_1, x_2, x_3\in A$,
\begin{eqnarray}
[R(x_1),R(x_2),R(x_3)] &=& R\Big( [R(x_1),R(x_2),x_3] +[
R(x_1),x_2,R(x_3)] +[ x_1,R(x_2),R(x_3)] \Big).
\mlabel{eq:rf30}
\end{eqnarray}

 {\bf A homogeneous Rota-Baxter operator $R$}  on the $3$-Lie algebra $A_{\omega}$ is a Rota-Baxter operator
satisfies that there exists $f: Z \rightarrow F$
satisfying
\begin{equation}
R(L_{m})=f(m)L_{m},~~  \forall m\in Z.
\mlabel{eq:rf31}
\end{equation}

\begin{theorem} Let $R: A_{\omega} \rightarrow A_{\omega}$ be a linear map defined as   Eq.~(\mref{eq:rf31}).  Then $R$ is a homogeneous Rota-Baxter operator of weight
$0$ on $A_{\omega}$ if and only if
$f$ satisfies for all $l, m, n\in Z$,

\begin{equation}
f(l)f(m)f(n)D(l, m, n)=(f(l)f(n)+f(m)f(n)+f(l)f(m))f(l+m+n-1)D(l, m, n).
\mlabel{eq:rf32}
\end{equation}
\mlabel{thm: thm0}
\end{theorem}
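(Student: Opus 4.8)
The plan is to directly substitute the homogeneous ansatz $R(L_m)=f(m)L_m$ into the weight-zero Rota-Baxter identity~\eqref{eq:rf30} evaluated on basis elements $x_1=L_l$, $x_2=L_m$, $x_3=L_n$, and then read off the scalar condition on $f$. Since $R$ is defined on a basis by scaling, it suffices to check~\eqref{eq:rf30} on all triples of basis vectors, because~\eqref{eq:rf30} is multilinear in $x_1,x_2,x_3$ and both sides are linear in $R$.

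First I would compute the left-hand side: $[R(L_l),R(L_m),R(L_n)] = f(l)f(m)f(n)[L_l,L_m,L_n] = f(l)f(m)f(n)D(l,m,n)L_{l+m+n-1}$, using Lemma~\ref{lem:rbd3}. Next I would compute the three bracket terms inside $R(\cdots)$ on the right-hand side. For instance $[R(L_l),R(L_m),L_n] = f(l)f(m)[L_l,L_m,L_n] = f(l)f(m)D(l,m,n)L_{l+m+n-1}$, and similarly the other two terms give $f(l)f(n)D(l,m,n)L_{l+m+n-1}$ and $f(m)f(n)D(l,m,n)L_{l+m+n-1}$ (all using that $D$ is totally skew-symmetric, so the determinant value is the same up to sign-and-reordering that is absorbed by the skew-symmetry of the bracket; in fact each of these brackets is literally $[L_l,L_m,L_n]$ since the $L$'s appear in the same order). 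Summing, the argument of $R$ on the right-hand side is $\bigl(f(l)f(m)+f(l)f(n)+f(m)f(n)\bigr)D(l,m,n)L_{l+m+n-1}$. Applying $R$ multiplies this by $f(l+m+n-1)$, giving $\bigl(f(l)f(m)+f(l)f(n)+f(m)f(n)\bigr)f(l+m+n-1)D(l,m,n)L_{l+m+n-1}$. Equating the coefficients of $L_{l+m+n-1}$ on the two sides yields exactly~\eqref{eq:rf32}, and conversely if~\eqref{eq:rf32} holds for all $l,m,n$ then~\eqref{eq:rf30} holds on all basis triples and hence everywhere by multilinearity.

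There is essentially no serious obstacle here; the statement is a routine unwinding of definitions. The one point requiring a little care is the bookkeeping of signs: one must make sure that when the three terms $[R(L_l),R(L_m),L_n]$, $[R(L_l),L_m,R(L_n)]$, $[L_l,R(L_m),R(L_n)]$ are expanded, each produces the \emph{same} determinant $D(l,m,n)$ (not $\pm D$ with inconsistent signs), which is immediate because in every term the basis vectors $L_l,L_m,L_n$ occur in the fixed order $l,m,n$, so each bracket equals $D(l,m,n)L_{l+m+n-1}$ times the appropriate product of values of $f$. A second minor point is the reduction to basis elements: since~\eqref{eq:rf31} only prescribes $R$ on the basis, one should note explicitly that $R$ is then extended linearly, so verifying the Rota-Baxter identity on basis triples is equivalent to verifying it in general. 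With these observations the equivalence is established.
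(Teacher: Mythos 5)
Your proposal is correct and follows essentially the same route as the paper's proof: substitute $R(L_m)=f(m)L_m$ into the weight-zero Rota-Baxter identity on basis triples, use Eq.~(\ref{eq:defthlmn}) to evaluate each bracket as $D(l,m,n)L_{l+m+n-1}$ times the appropriate product of $f$-values, and compare coefficients. Your extra remarks on sign bookkeeping and the reduction to basis elements are sound but not needed beyond what the paper already does.
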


\begin{proof} By Eqs.~(\mref{eq:defthlmn}), (\mref{eq:rf30}) and (\mref{eq:rf31}), we have
$$
[R(L_{l}), R(L_{m}), R(L_{n})]=f(l)f(m)f(n)D(l, m, n)L_{l+m+n-1},
$$
and
$$
R\Big( [R(L_{l}), R(L_{m}), L_{n}]+[R(L_{l}), L_{m}, R(L_{n})]+[L_{l}, R(L_{m}), R(L_{n})])=
$$
$$
(f(l)f(m)+f(l)f(n)+
f(m)f(n))f(l+m+n-1)D(l, m, n)L_{l+m+n-1}.
$$

Therefore, $R$ is a homogeneous Rota-Baxter operator on $A_{\omega}$ if and only if Eq. (\mref{eq:rf32}) holds.

\end{proof}

\subsection{Homogeneous Rota-Baxter operators with $f(0)+f(1)\neq 0$}
\mlabel{ss:prelie3lie}

In this section we discuss the homogeneous Rota-Baxter operators with weight $0$ defined by Eq. (\mref{eq:rf31}) with $f(0)+f(1)\neq 0$.

\begin{theorem} Let  $R: A_{\omega}\rightarrow A_{\omega}$ be  a linear map defined as Eq. (\mref{eq:rf31}) with $f(0)+f(1)\neq 0$. Then
$R$ is a homogeneous Rota-Baxter operator on $A_{\omega}$ if and only if

 $$f(m)=0, ~~ \text{for all}~~m\in Z, ~~\text{ and }~~  m\neq 0, 1.$$
 \mlabel{thm:thm1}
\end{theorem}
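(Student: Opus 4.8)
The plan is to combine Theorem \ref{thm: thm0} with the parity criterion for the vanishing of $D$ given in Lemma \ref{lem:rbd4}. The key observation is the choice of a convenient family of triples: for any $k \in Z$ with $k \neq 0, 1$, the integers $0, 1, k$ are pairwise distinct and are not all of the same parity (since $0$ is even and $1$ is odd), so Lemma \ref{lem:rbd4} gives $D(0, 1, k) \neq 0$. Since moreover $0 + 1 + k - 1 = k$, substituting $(l, m, n) = (0, 1, k)$ into Eq. \eqref{eq:rf32} and dividing by $D(0,1,k)$ yields
\begin{equation*}
f(0)f(1)f(k) = \bigl(f(0)f(k) + f(1)f(k) + f(0)f(1)\bigr)f(k).
\end{equation*}
Subtracting $f(0)f(1)f(k)$ from both sides leaves $\bigl(f(0)+f(1)\bigr)f(k)^2 = 0$, and since $f(0)+f(1)\neq 0$ and $F$ is a field, this forces $f(k) = 0$. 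As $k \neq 0,1$ was arbitrary, the ``only if'' direction follows.

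For the converse, assume $f(m) = 0$ for all $m \in Z$ with $m \neq 0, 1$, and verify Eq. \eqref{eq:rf32} for an arbitrary triple $(l, m, n)$. When $D(l,m,n) = 0$ both sides vanish, so we may assume $D(l,m,n) \neq 0$; by Lemma \ref{lem:rbd4} the integers $l, m, n$ are then pairwise distinct, so at most two of them lie in $\{0, 1\}$. Note that Eq. \eqref{eq:rf32} is invariant under permutations of $l, m, n$, since such a permutation changes $D(l,m,n)$ by a sign that is the same on both sides. If none or exactly one of $l, m, n$ lies in $\{0,1\}$, then at least two of $f(l), f(m), f(n)$ are zero, so every term on either side vanishes. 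If exactly two of them lie in $\{0,1\}$, we may assume $l = 0$, $m = 1$, $n \neq 0, 1$; then $f(n) = 0$ and $l + m + n - 1 = n$, so the left side equals $f(0)f(1)f(n)D(0,1,n) = 0$ and the right side equals $\bigl(f(0)f(n)+f(1)f(n)+f(0)f(1)\bigr)f(n)D(0,1,n) = f(0)f(1)f(n)D(0,1,n) = 0$. Thus Eq. \eqref{eq:rf32} holds in all cases, and $R$ is a homogeneous Rota-Baxter operator.

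The argument is short and essentially self-contained; the only points requiring care are the parity bookkeeping that guarantees $D(0,1,k) \neq 0$ via Lemma \ref{lem:rbd4}, and the fortunate fact that the index shift $l+m+n-1$ returns exactly $k$ on the chosen triple, which is what makes the quadratic relation in $f(k)$ degenerate to $\bigl(f(0)+f(1)\bigr)f(k)^2 = 0$. I do not expect any genuine obstacle here.
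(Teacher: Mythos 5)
Your proof is correct and follows essentially the same route as the paper: the ``only if'' direction specializes Eq. (\ref{eq:rf32}) to the triple $(0,1,k)$, exploits $0+1+k-1=k$ to reduce the identity to $(f(0)+f(1))f(k)^2=0$, and concludes $f(k)=0$, exactly as in the paper's proof. Your converse direction merely spells out the case analysis that the paper summarizes as ``a direct computation,'' so there is nothing substantively different to compare.
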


\begin{proof}  If $f$ satisfies $f(m)=0,$ ~ for all~$m\in Z,$  and $ m\neq 0, 1.$ By a direct computation $R$ is a  homogeneous Rota-Baxter operator.

Conversely,  if $R$ is a  homogeneous Rota-Baxter operator with $f(0)+f(1)\neq 0$. Then  Eq.~(\mref{eq:rf30}) of the case  $l=0$, $n=1$  becomes
$$f(0)f(m)f(1)=\{f(0)f(1)+f(m)f(1)
+f(0)f(m)\}f(m), \forall m\in Z, m\neq 0, 1.$$

Since $f(0)+f(1)\neq 0$, we have
$f(m)^{2}=0, $ for all $ m\in Z$ and $m\neq 0, 1.$  The proof is completed.

\end{proof}

\subsection{Homogeneous Rota-Baxter operators with  $f(0)+f(1)=0 $}
\mlabel{ss:prelie3lie}

\begin{lemma}  Let $R: A_{\omega} \rightarrow A_{\omega}$ be a linear map of $A$ defined as
Eq. (\mref{eq:rf31}) with $f(0)+f(1)=0$. Then $R$ is a homogeneous Rota-Baxter operator if and only if
for all $l, m, n\in Z$,
\begin{equation}
f(2l+1)f(2m+1)f(2n)
=(f(2l+1)f(2m+1)+f(2l+1)f(2n)
\mlabel{eq:rf33}
\end{equation}

\hspace{6.3cm}$+f(2m+1)f(2n))f(2l+2m+2n+1), ~~ ~ m\neq l,$

\begin{equation}
f(2l+1)f(2m)f(2n)=(f(2l+1)f(2m)+f(2l+1)f(2n)
\mlabel{eq:rf34}
\end{equation}

\hspace{6.4cm}$+f(2m)f(2n))f(2l+2m+2n),  ~m\neq n.$
\mlabel{lem:rbd5}
\end{lemma}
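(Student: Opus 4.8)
The plan is to derive Eqs.~(\mref{eq:rf33}) and (\mref{eq:rf34}) from the master equation~(\mref{eq:rf32}) of Theorem~\mref{thm: thm0}, by specializing the triple $(l,m,n)$ according to the parity classification of the vanishing locus of $D(l,m,n)$ provided by Lemma~\mref{lem:rbd4}. The key observation is that Eq.~(\mref{eq:rf32}) is automatically satisfied whenever $D(l,m,n)=0$, so the only content of being a homogeneous Rota-Baxter operator lies in those triples with $D(l,m,n)\neq 0$. By Lemma~\mref{lem:rbd4}, $D(l,m,n)\neq 0$ forces the three integers to be pairwise distinct and \emph{not} all of the same parity; hence exactly one of them has parity different from the other two. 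This splits into two cases: either two of the indices are odd and one is even, giving a triple of the shape $(2l+1, 2m+1, 2n)$ with $l\neq m$, or two are even and one is odd, giving $(2l+1, 2m, 2n)$ with $m\neq n$. (Pairwise-distinctness of the original integers translates precisely into the stated inequalities $m\neq l$ in the first case and $m\neq n$ in the second; the even/odd pair in the minority class being automatically distinct from the singleton.)

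First I would observe that in the case $f(0)+f(1)=0$, if any further relation among the $f$-values were needed at parity-homogeneous triples it would already be vacuous, so no hypothesis beyond $f(0)+f(1)=0$ enters here. Then, for a triple of the form $(2l+1,2m+1,2n)$ with $l\neq m$, one has $l+m+n-1$ replaced by the sum $(2l+1)+(2m+1)+2n-1 = 2l+2m+2n+1$, and since $D\neq 0$ we may cancel $D(2l+1,2m+1,2n)$ from both sides of~(\mref{eq:rf32}); this yields exactly Eq.~(\mref{eq:rf33}). Similarly, for $(2l+1,2m,2n)$ with $m\neq n$, the index of the output term is $(2l+1)+2m+2n-1 = 2l+2m+2n$, and cancelling the nonzero $D(2l+1,2m,2n)$ gives Eq.~(\mref{eq:rf34}). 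Conversely, given that~(\mref{eq:rf33}) and~(\mref{eq:rf34}) hold, I would check that~(\mref{eq:rf32}) holds for \emph{all} triples: if $D(l,m,n)=0$ both sides vanish; if $D(l,m,n)\neq 0$, Lemma~\mref{lem:rbd4} guarantees the triple, after reordering (both sides of~(\mref{eq:rf32}) are symmetric in $l,m,n$ and $D$ is alternating so $D^2$-type symmetry is irrelevant once we have cancelled — more precisely, permuting the arguments multiplies $D$ by a sign on both sides, so the identity is permutation-invariant), has one of the two admissible parity shapes, and the corresponding relation~(\mref{eq:rf33}) or~(\mref{eq:rf34}) multiplied back by $D$ recovers~(\mref{eq:rf32}).

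The main point requiring care — the ``obstacle,'' though it is more bookkeeping than difficulty — is the translation between the symmetric, unordered statement of~(\mref{eq:rf32}) and the asymmetric way the indices are written in~(\mref{eq:rf33})--(\mref{eq:rf34}): one must confirm that writing the odd indices as $2l+1, 2m+1$ and the even one as $2n$ (resp.\ the even ones as $2m,2n$ and the odd as $2l+1$) loses no generality, and that the side conditions $m\neq l$ and $m\neq n$ exactly capture ``pairwise distinct'' given that the minority index is always distinct from the majority pair by parity. Once the reordering is handled and the nonvanishing of $D$ on the relevant locus (Lemma~\mref{lem:rbd4}) is invoked to justify cancellation, the equivalence is immediate from Theorem~\mref{thm: thm0}.
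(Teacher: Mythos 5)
Your proposal is correct and follows essentially the same route as the paper: specialize the master identity~(\mref{eq:rf32}) to the triples where $D\neq 0$, which by Lemma~\mref{lem:rbd4} are exactly (up to permutation) the two parity shapes $(2l+1,2m+1,2n)$ with $l\neq m$ and $(2l+1,2m,2n)$ with $m\neq n$, and cancel the nonzero determinant. The paper compresses this into a single sentence; your write-up just makes explicit the points it leaves implicit (the vacuousness of~(\mref{eq:rf32}) when $D=0$, the permutation-invariance of~(\mref{eq:rf32}), and the fact that the hypothesis $f(0)+f(1)=0$ plays no role in this equivalence).
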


\begin{proof} For all $l, m, n\in Z$ and  $l\neq m$ and $m\neq n$, the determinant $D(2l+1, 2m+1, 2n)\neq 0$ and $D(2l+1, 2m, 2n)\neq 0$.  Thanks to Eq. (\mref{eq:defthlmn}) and Eq. (\mref{eq:rf31}),
we obtain Eq. (\mref{eq:rf33}) and Eq. (\mref{eq:rf34}).
\end{proof}

\noindent 3.2.1  {\bf Homogeneous Rota-Baxter operators with  $f(0)=-f(1)\neq 0. $}

\vspace{2mm} Now we discuss the case $f(0)+f(1)=0$, but $f(0)\neq 0$. By Lemma \mref{lem:rbd2}, we can suppose $f(0)=1$, then $f(1)=-1.$

\begin{coro}  Let $R$ be a homogeneous Rota-Baxter operator with $f(0)=-f(1)=1$.
Then we have for all $l, m, n\in Z$,

\vspace{2mm}1) $f(2l+1)f(2m+1)=(f(2l+1)+f(2m+1)+f(2l+1)f(2m+1))f(2l+2m+1), l\neq m,$

\vspace{2mm}2) $f(2l+1)f(2m)=(f(2l+1)+f(2m)+f(2l+1)f(2m))f(2l+2m), m\neq 0,$

\vspace{2mm}3) $f(2l+1)f(2n)=(f(2l+1)+f(2n)-f(2l+1)f(2n))f(2l+2n+1), l\neq 0,$

\vspace{2mm}4) $f(2m)f(2n)=(f(2m)+f(2n)-f(2m)f(2n))f(2m+2n), m\neq n.$
\mlabel{cor:cor1}
\end{coro}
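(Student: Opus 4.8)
The plan is to specialize the two master identities from Lemma \mref{lem:rbd5} to the normalization $f(0)=1$, $f(1)=-1$, and read off the four claimed relations one at a time. Recall that Eq.~(\mref{eq:rf33}) holds whenever $m\neq l$ and Eq.~(\mref{eq:rf34}) holds whenever $m\neq n$; these are the only constraints on the index choices I may make.

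\textbf{Deriving relations 1) and 3).} Both come from Eq.~(\mref{eq:rf33}), which relates values on two odd indices and one even index. For relation 1) I would set $n=0$ in Eq.~(\mref{eq:rf33}): the left side becomes $f(2l+1)f(2m+1)f(0)=f(2l+1)f(2m+1)$, and on the right $f(2n)=f(0)=1$ while $f(2l+2m+2n+1)=f(2l+2m+1)$, giving exactly
\[
f(2l+1)f(2m+1)=(f(2l+1)f(2m+1)+f(2l+1)+f(2m+1))f(2l+2m+1),\quad l\neq m .
\]
For relation 3) I would instead use the index $n=1$ is not available in Eq.~(\mref{eq:rf33}) since the last slot must be even; instead I replace the even slot $2n$ by something that produces $f(1)=-1$, i.e. take the even index to be $2n$ with $n$ chosen so the \emph{sum} is what we want — more precisely, rename variables in Eq.~(\mref{eq:rf33}) so that one of the odd slots is forced to be $1$. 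Writing Eq.~(\mref{eq:rf33}) with $l$ replaced so that $2l+1=1$, i.e. $l=0$, gives $f(1)f(2m+1)f(2n)=(f(1)f(2m+1)+f(1)f(2n)+f(2m+1)f(2n))f(2m+2n+1)$ for $m\neq 0$; substituting $f(1)=-1$ and multiplying through by $-1$ yields
\[
f(2m+1)f(2n)=(f(2m+1)+f(2n)-f(2m+1)f(2n))f(2m+2n+1),\quad m\neq 0,
\]
which is relation 3) after renaming $m\to l$ (the constraint $m\neq l$ in Eq.~(\mref{eq:rf33}) becomes $0\neq m$, i.e. $l\neq 0$).

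\textbf{Deriving relations 2) and 4).} These come from Eq.~(\mref{eq:rf34}), which involves one odd and two even indices. For relation 2) I set $n=0$ in Eq.~(\mref{eq:rf34}): the left side is $f(2l+1)f(2m)f(0)=f(2l+1)f(2m)$, the right side has $f(2n)=1$ and $f(2l+2m+2n)=f(2l+2m)$, and the constraint $m\neq n$ becomes $m\neq 0$, producing relation 2) verbatim. For relation 4) I set $l$ so that the odd slot becomes $1$, i.e. $2l+1=1$, $l=0$, in Eq.~(\mref{eq:rf34}): the left side is $f(1)f(2m)f(2n)=-f(2m)f(2n)$, the right side is $(f(1)f(2m)+f(1)f(2n)+f(2m)f(2n))f(2m+2n)=(-f(2m)-f(2n)+f(2m)f(2n))f(2m+2n)$, and multiplying by $-1$ gives relation 4), with the constraint $m\neq n$ inherited directly.

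There is no real obstacle here: the corollary is a pure substitution exercise, and the only thing requiring a moment's care is bookkeeping the index-inequality side conditions so that each specialization stays within the range where Eqs.~(\mref{eq:rf33}) and (\mref{eq:rf34}) are known to hold — in particular checking that setting $n=0$ or $l=0$ never collapses a forbidden coincidence, and that the sign flips coming from $f(1)=-1$ land correctly in the $\pm f(2l+1)f(2n)$ and $\pm f(2m)f(2n)$ terms of relations 3) and 4). I would close by remarking that these four relations, together, are equivalent to Eqs.~(\mref{eq:rf33})--(\mref{eq:rf34}) under the normalization, so no information is lost.
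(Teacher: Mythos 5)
Your proof is correct and follows essentially the same route as the paper, which simply cites Lemma~\mref{lem:rbd5} together with the normalization $f(0)=-f(1)=1$; your substitutions ($n=0$ in Eqs.~(\mref{eq:rf33})--(\mref{eq:rf34}) for relations 1) and 2), and $l=0$ with the sign flip from $f(1)=-1$ for relations 3) and 4)) are exactly the intended specializations, with the side conditions tracked correctly. The only caveat is your closing remark that the four relations are \emph{equivalent} to Eqs.~(\mref{eq:rf33})--(\mref{eq:rf34}): that reverse implication is not justified (the lemma's identities carry three free indices) and is not needed for the corollary, so it should be dropped or weakened.
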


\begin{proof}
The result follows from Lemma \mref{lem:rbd5} and $f(0)=-f(-1)=1$.
\end{proof}

\begin{theorem}  Let $R$ be a homogeneous Rota-Baxter operator  with $f(0)=-f(1)=1$.
Then we have
\begin{equation}
f(1-m)+f(m)=0, ~~ \text{for all} ~ m\in Z.
\mlabel{eq:rf35}
\end{equation}
\mlabel{thm:thm2}
\end{theorem}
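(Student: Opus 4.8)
The plan is to obtain the asserted antisymmetry about $m=\tfrac12$ from a single well-chosen specialization of the relations in Corollary~\ref{cor:cor1}, together with the normalization $f(0)=-f(1)=1$. The first observation is structural: the substitution $m\mapsto 1-m$ is an involution of $Z$ that interchanges the odd and the even integers, and the identity to be proved, $f(1-m)+f(m)=0$, is itself symmetric in $m$ and $1-m$. Consequently it suffices to establish it for every \emph{odd} $m$: given an even $q$, the integer $1-q$ is odd, and applying the odd case to $1-q$ gives $f(1-q)+f(q)=0$ because $1-(1-q)=q$.

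So I would write $m=2l+1$ with $l\in Z$. For $l=0$ the identity reads $f(1)+f(0)=-1+1=0$, which is the hypothesis, so the only real work is the case $l\neq 0$. There I would take the relation in Corollary~\ref{cor:cor1}(2), that is $f(2l+1)f(2m)=\big(f(2l+1)+f(2m)+f(2l+1)f(2m)\big)f(2l+2m)$ valid for $m\neq 0$, and specialize the free index to $m=-l$; this is permitted since $l\neq 0$ forces $-l\neq 0$. Then $2l+2m=0$, so the outer factor on the right becomes $f(0)=1$, and after cancelling the common summand $f(2l+1)f(-2l)$ on the two sides the relation degenerates to $f(2l+1)+f(-2l)=0$. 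Since $-2l=1-(2l+1)$, this is exactly $f(m)+f(1-m)=0$ for the odd integer $m=2l+1\neq 1$. (Alternatively, one may specialize Corollary~\ref{cor:cor1}(3) by putting $n=-l$, in which case the outer factor becomes $f(1)=-1$ and the same conclusion drops out once the signs are tracked.)

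Putting the two cases together yields the identity for all odd $m$, and the reduction from the first paragraph then delivers it for every $m\in Z$. I do not anticipate a genuine obstacle here: the only points to watch are that the chosen substitution must respect the inequality constraint carried by the relation in Corollary~\ref{cor:cor1} (which is precisely why $l=0$, i.e.\ $m=0,1$, must be read off separately from the normalization), and the small bookkeeping check that $m$ and $1-m$ together exhaust $Z$ as $m$ runs over the odd integers.
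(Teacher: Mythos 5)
Your proof is correct and follows essentially the same route as the paper: both arguments specialize the relations of Corollary~\ref{cor:cor1} at the point where the outer argument becomes $0$ (taking $m=-l$), use the normalization $f(0)=-f(1)=1$ to cancel the common term, and then invoke the symmetry of the identity under $m\mapsto 1-m$ to cover the even case. The only (harmless) difference is that you specialize part (2) of the corollary alone, whereas the paper first subtracts part (3) from part (2) before specializing, which is a slightly longer way to the same cancellation.
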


\begin{proof} According to Corollary \mref{cor:cor1},  for all $n\in Z$ and $n\neq 0$, we have

 $f(2m+1)(f(2m+2n)-f(2m+2n+1))+f(2n)(f(2m+2n)-f(2m+2n+1))$

 \vspace{2mm}$+f(2m+1)f(2n)(f(2m+2n)+f(2m+2n+1))=0.$

Then in the case $m=-n\neq 0$,  we have $f(2m+1)+f(-2m)=0$, and
$$f(2m+1)+f(1-(2m+1))=0.$$ Similarly, we have  $f(1-2(-m))+ f(2(-m))=0$, for all $m\in Z.$
It follows Eq. (\mref{eq:rf35}).

\end{proof}

\begin{coro} If $R$ is a homogeneous Rota-Baxter operator  on $A_{\omega}$ satisfying that $f(0)=-f(1)=1$, and there exist  $k$, $l$, $m$, $n\in Z$ such that $f(2k)\neq 0, f(2l)\neq 0$,  $ f(2m+1)\neq 0$, $f(2n+1)\neq 0$, where the product  $(k-l)(m-n)klmn\neq 0$. Then we have

\vspace{2mm}1) $f(2k+2l)\neq 0$,  \quad 2) $f(2k+2m)\neq 0$, \quad 3) $f(2k+2m+1)\neq 0$,

\vspace{2mm}4) $f(2m+2n+1)\neq 0$, \quad 5)  $f(1-2k+2m)\neq 0$, $k\neq -m$,

\vspace{2mm}6) $f(4k)\neq 0$, \quad 7) $f(2m+2n+2k+1)\neq 0$,

\vspace{2mm} 8) $f(2m+2k+2l)\neq 0$, \quad
9) $f(2k-2m)\neq 0$, $k\neq -m$,

\vspace{2mm}10) $f(1-2k-2m)\neq 0$,   \quad
11) $f(1-4k)\neq 0$.

\mlabel{cor:cor2}
\end{coro}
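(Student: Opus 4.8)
The plan is to combine the four multiplicative identities of Corollary~\ref{cor:cor1} with the symmetry relation $f(1-m)+f(m)=0$ from Theorem~\ref{thm:thm2}, using the latter to rewrite values at odd arguments $2\ell+1$ in terms of values at even arguments $-2\ell = 1-(2\ell+1)$. Each of the eleven conclusions asserts the nonvanishing of $f$ at some new point, so the strategy is uniform: pick the relation from Corollary~\ref{cor:cor1} in which the target value $f(2k+2l)$, $f(2k+2m)$, etc.\ appears on the right-hand side as a factor, and show that the left-hand side is nonzero while the bracketed coefficient on the right is finite; since $F$ is a field, a product being nonzero forces each factor---in particular the target value---to be nonzero.

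First I would treat items (1)--(4), which follow almost immediately. For (1), apply relation~4) of Corollary~\ref{cor:cor1} with the pair $(k,l)$ (legitimate since $k\neq l$): the left side $f(2k)f(2l)$ is nonzero by hypothesis, hence $f(2k+2l)\neq 0$. For (2) use relation~2) with $2l+1$ replaced by an odd index whose $f$-value is nonzero---here one uses $f(2m+1)\neq 0$ and $f(2k)\neq 0$ to get $f(2k+2m)\neq 0$ from $f(2m+1)f(2k) = (\cdots)f(2m+2k)$, requiring $k\neq 0$, which holds. Item (3) comes from relation~1) or~3) applied to $f(2m+1)$ and $f(2k+1)$; but $f(2k+1)$ need not be among the hypotheses, so instead I would use relation~3) with the odd index $2m+1$ and the even index $2k$: $f(2m+1)f(2k) = (f(2m+1)+f(2k)-f(2m+1)f(2k))f(2m+2k+1)$, giving $f(2k+2m+1)\neq 0$, using $k\neq 0$. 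Item (4) is relation~1) with the pair $(m,n)$, valid since $m\neq n$.

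Items (5)--(11) are obtained by iterating: one feeds the nonvanishing conclusions already proved back into the relations. For (6), apply (1) with $l=k$ (allowed once we know $f(2k)\neq 0$, though $k\neq k$ fails, so instead iterate: $f(4k)$ comes from relation~4) with indices $2k$ and $2k$---but that needs $k\neq k$; the correct route is to first get $f(2k+2l)\neq 0$ with $l$ the \emph{other} given even index, then pair $2k$ with $2k+2l$ via relation~4), or use (9) together with the symmetry). For (9), use relation~4) in the form involving $f(2k)$ and $f(-2m)$; by Theorem~\ref{thm:thm2}, $f(-2m) = -f(1+2m) = -f(2m+1)\neq 0$, so $f(2k)f(-2m) = (\cdots)f(2k-2m)\neq 0$ provided $k\neq -m$. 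Items (5) and (10) then follow from (3) and (9) respectively by applying the symmetry $f(1-x)+f(x)=0$: $f(1-2k+2m)=-f(2k-2m+... )$ up to sign bookkeeping, hence nonzero. Item (7) chains (4) and (2): knowing $f(2m+2n+1)\neq 0$ is an odd-index value, pair it with $f(2k)$ using relation~2) to get $f(2m+2n+2k)\neq 0$, then relation~3) to shift to $f(2m+2n+2k+1)$. Items (8) and (11) are the analogous two-step chains: (8) from (1) then (2) on the index $2k+2l$ paired with $2m$, and (11) from (6) via the symmetry relation, writing $1-4k = 1-(4k)$ and using $f(4k)\neq 0$ together with a suitable instance of relations 1)--4) to transfer nonvanishing across the reflection $x\mapsto 1-x$.

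The main obstacle I anticipate is the bookkeeping of \emph{which} congruence class (even versus odd) each intermediate index lies in, and correspondingly which of the four relations of Corollary~\ref{cor:cor1} is applicable, together with checking the side conditions ($l\neq m$, $m\neq 0$, $l\neq 0$, $m\neq n$, and the various $\neq -m$ constraints) are met at each step---these are exactly where the hypothesis $(k-l)(m-n)klmn\neq 0$ gets used. A secondary subtlety is that the symmetry $f(1-m)+f(m)=0$ lets one flip signs freely but one must confirm no target value is forced to be its own negation (which in characteristic zero would still only give $0$ if the value vanished, so this is harmless, but worth a remark). No single step is deep; the proof is a finite, if somewhat intricate, propagation of nonvanishing through the four functional equations and one reflection symmetry.
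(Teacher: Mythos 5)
Your overall strategy---propagating nonvanishing through the four identities of Corollary \mref{cor:cor1} together with the reflection $f(1-m)=-f(m)$ of Theorem \mref{thm:thm2}---is the same as the paper's, and your arguments for items 1)--5) and 9)--11) are essentially correct. (One crossed attribution: 5) follows from 9) plus the reflection, and 10) follows from 2) plus the reflection, not from ``(3) and (9) respectively'' as written; since all the needed facts appear in your list, this is harmless.)

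Items 6), 7) and 8), however, are not actually established by the chains you describe. For 6), pairing $2k$ with $2k+2l$ in relation 4) yields $f(4k+2l)\neq 0$, not $f(4k)\neq 0$, and ``use (9) together with the symmetry'' is a gesture rather than an argument; a route that works is to pair the two \emph{distinct} even indices $2k+2m$ (item 2)) and $2k-2m$ (item 9)) in relation 4): their sum is $4k$ and they are distinct because $m\neq 0$ (replace $m$ by $n$ if $k=-m$, which the hypothesis does not exclude). For 7) and 8) your two-step chains invoke side conditions that the hypothesis $(k-l)(m-n)klmn\neq 0$ does \emph{not} supply: the shift from $f(2m+2n+2k)$ to $f(2m+2n+2k+1)$ via relation 3) requires the odd half-index $m+n\neq 0$, and pairing $2m+1$ with $2k+2l$ via relation 2) requires $k+l\neq 0$; neither is assumed, and both can fail. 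The paper sidesteps this by applying the three-factor identities (\mref{eq:rf33}) and (\mref{eq:rf34}) of Lemma \mref{lem:rbd5} directly---for instance $f(2m+1)f(2n+1)f(2k)=(\cdots)f(2m+2n+2k+1)$ subject only to $m\neq n$---which give 7) and 8) in one step with no excluded cases. So your closing claim that the side conditions are ``exactly where the hypothesis gets used'' is too optimistic: the hypothesis controls $k\neq l$, $m\neq n$ and the nonvanishing of $k,l,m,n$, but says nothing about sums such as $k+l$, $m+n$ or $k+m$ vanishing, and that is precisely where the two-factor chains break.
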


\begin{proof}  The results 1), 2), 3),  4), 5) and 6) follow from Corollary \mref{cor:cor1} and $f(0)=-f(1)=1$,  the  results
7) and 8) follow from Lemma  \mref{lem:rbd5}, and  the results  9), 10) and 11) follow from Theorem \mref{thm:thm2}.
\end{proof}

\begin{lemma} Let  $R : A_{\omega} \longrightarrow A_{\omega}$ be a linear map defined by
Eq.( \mref{eq:rf31}). If $f$ satisfies $f(0)=-f(1)=1$ and that there exist finite distinct integers $m_i$ such that
$f(m_{i})\neq 0$ and $f(m)=0$ for $ m\in Z$ and $m\neq m_{i}$, then $R$ is not a homogeneous Rota-Baxter operator on $A_{\omega}$, where $1\leq i\leq t$, and $ m, m_i\neq 0, 1$.
\mlabel{lem:rbd6}
\end{lemma}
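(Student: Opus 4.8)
The plan is to argue by contradiction. Suppose $R$ is a homogeneous Rota-Baxter operator on $A_{\omega}$ with $f(0)=-f(1)=1$ whose support $S:=\{\,m\in Z\mid f(m)\neq 0\,\}=\{0,1\}\cup\{m_1,\dots,m_t\}$ is finite with $t\geq 1$. By Theorem~\mref{thm:thm2} we have $f(1-m)=-f(m)$ for every $m\in Z$, so $S$ is stable under the involution $\sigma(m)=1-m$. Since $\sigma$ has no integer fixed point, it acts freely on $S\setminus\{0,1\}$, and because $m+(1-m)=1$ is odd, each $\sigma$-orbit in $S\setminus\{0,1\}$ consists of one nonzero even integer and one odd integer $\neq 1$. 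Consequently $t$ is even, $S$ contains exactly $t/2$ nonzero even indices and $t/2$ odd indices distinct from $1$, and $t/2\geq 1$.

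First I would treat the case $t/2\geq 2$. Then there are distinct nonzero even integers $2k,2l\in S$ and distinct integers $2m+1,2n+1\in S$ with $m,n\neq 0$, so $(k-l)(m-n)klmn\neq 0$ and Corollary~\mref{cor:cor2} applies; part~(6) gives $f(4k)\neq 0$. Re-feeding the nonzero values $f(4k)$ and $f(2k)$ (as the two even arguments) together with $f(2m+1),f(2n+1)$ into Corollary~\mref{cor:cor2}(6) — the product condition $(2k-k)(m-n)(2k)k\,mn\neq 0$ still holds — yields $f(8k)\neq 0$, and an obvious induction then gives $f(2^{j}k)\neq 0$ for all $j\geq 1$. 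Since $k\neq 0$ these are infinitely many nonzero values, contradicting finiteness of $S$.

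It remains to handle $t/2=1$, which I expect to be the main obstacle: here $S$ has a single nonzero even index and a single odd index $\neq 1$, so Corollary~\mref{cor:cor1} produces nothing new from $S$ and the iteration above cannot be started. In this case $S=\{0,1,2k_0,1-2k_0\}$ with $k_0\neq 0$, and $f(1-2k_0)=-f(2k_0)=:-a$ with $a\neq 0$ by Theorem~\mref{thm:thm2}. I would apply Eq.~(\mref{eq:rf32}) to the triple $(l,m,n)=(2k_0+1,\,0,\,1-2k_0)$: these integers are pairwise distinct and of mixed parity, hence $D(l,m,n)\neq 0$ by Lemma~\mref{lem:rbd4}; moreover $l+m+n-1=1$, and $2k_0+1\notin S$ (the only odd elements of $S$ are $1$ and $1-2k_0$, neither equal to $2k_0+1$ since $k_0\neq 0$), so $f(2k_0+1)=0$. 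Then the left side of Eq.~(\mref{eq:rf32}) is $0$, while the right side equals $f(0)f(1-2k_0)f(1)\,D(l,m,n)=a\,D(l,m,n)\neq 0$, a contradiction. Combining the two cases, no such $R$ exists. The only delicate points are the parity count that forces $t$ to be even (so that both parities necessarily occur in $S$) and, in the case $t/2=1$, selecting a triple that activates the degenerate ``$f(l)=0$'' branch of Eq.~(\mref{eq:rf32}).
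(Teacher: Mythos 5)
Your proof is correct, and its main engine is the same as the paper's: Theorem \ref{thm:thm2} makes the support stable under $m\mapsto 1-m$, so it contains a nonzero even index, and then part 6) of Corollary \ref{cor:cor2} doubles that index indefinitely, contradicting finiteness. The genuine difference is your case split at $t/2=1$. The paper simply invokes Corollary \ref{cor:cor2}(6) for the single even index $m_2$, but that corollary's stated hypothesis requires \emph{two} distinct nonzero even indices $2k,2l$ and two distinct odd indices $2m+1,2n+1$ in the support with $(k-l)(m-n)klmn\neq 0$, which is unavailable when $t=2$ (support $\{0,1,2k_0,1-2k_0\}$). You correctly notice this and instead dispose of that case by feeding the triple $(2k_0+1,\,0,\,1-2k_0)$ into Eq.~(\ref{eq:rf32}): the left side vanishes because $2k_0+1$ lies outside the support, while the right side is $f(0)f(1-2k_0)f(1)D\neq 0$ --- a clean, self-contained contradiction. (The paper's step can also be repaired without your case split, since the conclusion $f(4k)\neq 0$ actually needs only $f(2k)\neq 0$, $k\neq 0$: one gets $f(-2k)\neq 0$ from Corollary \ref{cor:cor1}(4), then $f(1+2k)\neq 0$ from Theorem \ref{thm:thm2}, and then Eq.~(\ref{eq:rf34}) with indices $(1+2k,2k,0)$ gives $f(4k)\neq 0$; but as the corollary is literally stated, your extra case is the honest way to close the argument.) So your write-up is not only valid but slightly more careful than the printed proof; the only cost is the additional explicit computation in the small-support case.
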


\begin{proof} If $R$ is a homogeneous Rota-Baxter operator. Thanks to Theorem \mref{thm:thm2},  $f(1-m_i)=-f(m_i)\neq 0$ for $1\leq i\leq t$. We obtain $t\geq 2$. Without loss of generality,   suppose $m_1$ is odd, then $m_2=1-m_1$ is even and $f(m_2)\neq 0$. Thanks to the result 6) in Corollary \mref{cor:cor2},  $f(2nm_2)\neq 0$ for all $n\in Z$.
It contradicts $t < \infty$. It follows the result.

\end{proof}

\begin{lemma} Let  $R : A_{\omega} \longrightarrow A_{\omega}$ be a linear map defined by
Eq.( \mref{eq:rf31}). If $f$ satisfies $f(0)=-f(1)=1$ and that there exist finite distinct integers $m_i$ such that
$f(m_{i})=0$ and $f(m)\neq 0$ for $ m\in Z$ and $m\neq m_{i}$, then $R$ is not a homogeneous Rota-Baxter operator on $A_{\omega}$, where $1\leq i\leq t$, and $m_i\neq 0, 1$.
\mlabel{lem:rbd7}
\end{lemma}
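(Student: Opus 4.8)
The plan is to argue by contradiction, mirroring the structure of the proof of Lemma~\mref{lem:rbd6} but exploiting the ``co-finite support'' hypothesis: here $f$ vanishes only at finitely many integers $m_1,\dots,m_t$ (none of them $0$ or $1$) and is nonzero everywhere else. First I would record, via Theorem~\mref{thm:thm2}, that the zero set is stable under $m\mapsto 1-m$: since $f(1-m)=-f(m)$, we have $f(m_i)=0$ if and only if $f(1-m_i)=0$, so $\{m_1,\dots,m_t\}$ is a union of pairs $\{m_i,1-m_i\}$ (note $m=1-m$ is impossible in $Z$), and in particular $t$ is even and the zero set contains both even and odd integers unless it is empty. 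If the zero set is empty, then $f(m)\neq 0$ for all $m$, and one applies Corollary~\mref{cor:cor1} together with Theorem~\mref{thm:thm2} to derive the kind of rigid functional equation that already appears in the paper; I expect this ``everywhere nonzero'' case to be eliminated quickly (e.g. relation~4) of Corollary~\mref{cor:cor1} forces a contradiction once $f$ is forced to take a specific multiplicative form on the even integers).

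So assume $t\geq 2$ and pick an even $m_i$, say $2k$ with $f(2k)=0$ and $k\neq 0$ (such an even zero exists by the pairing above, because some $m_j$ is even). The key step is then to use the positive-support consequences in Corollary~\mref{cor:cor2}: since $f$ is nonzero at all but finitely many points, we may freely choose auxiliary integers $l,m,n$ with $f(2l)\neq 0$, $f(2m+1)\neq 0$, $f(2n+1)\neq 0$ and all the nondegeneracy products nonzero, and then relations like 2), 3), 9) of Corollary~\mref{cor:cor2} — run in reverse — let us \emph{propagate} the single zero $f(2k)=0$ to an infinite family of zeros. Concretely, I would look at relation~2) of Corollary~\mref{cor:cor1}, $f(2l+1)f(2m)=(f(2l+1)+f(2m)+f(2l+1)f(2m))f(2l+2m)$, with $2m=2k$: the left side is $0$, so $f(2l+2k)\bigl(f(2l+1)+f(2k)+f(2l+1)f(2k)\bigr)=0$, i.e. $f(2l+2k)\cdot f(2l+1)=0$; choosing $l$ so that $2l+1\notin\{m_1,\dots,m_t\}$ (possible for all but finitely many $l$) forces $f(2l+2k)=0$ for all such $l$, hence infinitely many even zeros — contradicting $t<\infty$.

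The main obstacle, and the place where care is needed, is making sure at each propagation step that the ``auxiliary'' indices can be chosen to avoid the finite exceptional set \emph{and} to satisfy the inequality constraints ($l\neq m$, $m\neq n$, $klmn\neq 0$, etc.) that make the relevant determinant $D$ nonzero — i.e. that the relations of Corollary~\mref{cor:cor1}/\mref{cor:cor2} actually apply. Since we are only ever excluding finitely many values of the free parameter at a time, this is always possible, but the bookkeeping must be stated cleanly. A secondary subtlety is the degenerate subcase where the exceptional set, after closing under $m\mapsto 1-m$, might a priori consist only of odd integers (or only even); the pairing argument rules this out because $\{m_i,1-m_i\}$ always contains exactly one even and one odd integer, so once $t\geq 2$ there is guaranteed to be an even zero to seed the propagation (and symmetrically an odd one, giving a second route via Theorem~\mref{thm:thm2} and relation~1) of Corollary~\mref{cor:cor1} if one prefers). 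Assembling these, the finiteness of $t$ is contradicted in every case, so no such $R$ is a homogeneous Rota-Baxter operator.
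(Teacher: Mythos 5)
Your main argument is correct and is essentially the paper's own proof: by Theorem~\mref{thm:thm2} the zero set $\{m_1,\dots,m_t\}$ is closed under $m\mapsto 1-m$, hence (being nonempty) contains an even integer $2k$ with $k\neq 0$, and relation~2) of Corollary~\mref{cor:cor1} with $2m=2k$ collapses to $f(2l+1)\,f(2l+2k)=0$, so $f(2l+2k)=0$ for every $l$ outside a finite exceptional set, producing infinitely many zeros and contradicting $t<\infty$. The bookkeeping you worry about is exactly as mild as you suspect: only $l\neq 0$ and $2l+1\notin\{m_1,\dots,m_t\}$ need to be avoided.

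One aside in your proposal is false, although it does not affect the lemma as stated: you claim that the ``everywhere nonzero'' case (empty zero set) would be ``eliminated quickly'' by a contradiction from relation~4) of Corollary~\mref{cor:cor1}. It is not eliminated --- it genuinely occurs. Theorem~\mref{thm:thm4} with $m_0=1$ produces homogeneous Rota-Baxter operators with $f(2k)=\frac{1}{ka-(k-1)}$ and $f(1-2k)=-f(2k)$, which are nonzero for \emph{every} integer (e.g.\ $a=3$ gives $f(2k)=\frac{1}{2k+1}$, $f(1-2k)=-\frac{1}{2k+1}$, and $W_1=Z$). Fortunately that case lies outside the hypothesis of the present lemma, which posits a nonempty finite zero set ($1\le i\le t$), so your proof of the statement itself stands; but the remark should be deleted rather than left as a claimed (and unprovable) contradiction.
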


\begin{proof} If $R$ is a homogeneous Rota-Baxter operator. Thanks to Theorem \mref{thm:thm2},  $f(1-m_i)=-f(m_i)= 0$ for $1\leq i\leq t$, and for all $n\neq m_i$, $f(n)=-f(1-n)\neq 0$.
It follows that there exist infinite odd $2l+1\in Z$ such that $f(2l+1)\neq 0$. Without loss of generality, suppose $m_1$ is even, then by Corollary \mref{cor:cor1},  there exist infinite odd $2l+1\in Z$, such that
$f(2l+2m_1)=0$. It contradicts to $t <\infty$.

\end{proof}

\begin{theorem}  Let  $R : A_{\omega} \longrightarrow A_{\omega}$ be a linear map defined by
Eq.( \mref{eq:rf31}). If $R$ is a homogeneous Rota-Baxter operator on $A_{\omega}$ with $f(0)=-f(1)=1$ and that there exists $m\neq 0, 1$ such that
$f(m)\neq 0$. Then there exists a positive integer $m_0$ such that for  $m\in Z$, $f(m)\neq 0$ if and only if
$ m\in  W=\{2m_0k | k \in Z\}\cup\{1-2m_0k | k \in Z\}.$
\mlabel{thm:thm3}

\end{theorem}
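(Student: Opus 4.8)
The plan is to determine the support $S:=\{m\in Z\mid f(m)\neq 0\}$ of $f$ exactly; the assertion is then just $S=W$. First I would show that $S$ is infinite. Since $R$ is a homogeneous Rota--Baxter operator with $f(0)=-f(1)=1$ and, by hypothesis, $f(m)\neq 0$ for some $m\neq 0,1$, the set $\{m\in Z\mid m\neq 0,1,\ f(m)\neq 0\}$ is nonempty; by Lemma \mref{lem:rbd6} it cannot be finite, so $S$ is infinite. By Theorem \mref{thm:thm2} the map $m\mapsto 1-m$ is an involution of $S$ reversing parity, hence the even part and the odd part of $S$ are both infinite. In particular there are infinitely many nonzero $k$ with $f(2k)\neq 0$ and infinitely many nonzero $l$ with $f(2l+1)\neq 0$, so the hypotheses of Corollary \mref{cor:cor2} are satisfied and all of its conclusions are available, the four indices there being chosen freely among such nonzero indices subject to $k\neq l$ and $m\neq n$.

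Next I would reduce the problem to the even part. Put $E:=\{k\in Z\mid f(2k)\neq 0\}$. By Theorem \mref{thm:thm2}, $f(2l+1)\neq 0\iff f(-2l)\neq 0\iff -l\in E$, so $S$ is completely determined by $E$, and it suffices to prove that $E$ is a subgroup of $(Z,+)$. Clearly $0\in E$ since $f(0)=1$. For nonzero $a\in E$, the identity $f(4k)\neq 0$ of Corollary \mref{cor:cor2} gives $2a\in E$. For nonzero $a,b\in E$, apply the identity $f(1-2k-2m)\neq 0$ of Corollary \mref{cor:cor2} with even index $a$ and odd index $-b$ (this is legitimate: $f\big(2(-b)+1\big)=f(1-2b)=-f(2b)\neq 0$ by Theorem \mref{thm:thm2}, and the two auxiliary indices exist by the first step); it yields $f(1-2a+2b)\neq 0$, hence $f(2a-2b)\neq 0$ by Theorem \mref{thm:thm2}, i.e.\ $a-b\in E$. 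Taking here the two elements to be $b$ and $2b$ gives $-b=b-2b\in E$ for every nonzero $b\in E$. Thus $E$ is closed under subtraction and contains $0$, so $E=m_0Z$ for a unique $m_0\geq 0$; and $m_0\geq 1$ because $E$ contains a nonzero element (if $f(m)\neq 0$ with $m\neq 0,1$, then $m/2\in E\setminus\{0\}$ when $m$ is even, and $-(m-1)/2\in E\setminus\{0\}$ when $m$ is odd).

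Finally I would translate back. Using $E=m_0Z=-m_0Z$: for even $m=2j$, $f(m)\neq 0\iff j\in m_0Z\iff m\in\{2m_0k\mid k\in Z\}$; and for odd $m=2l+1$, $f(m)\neq 0\iff -l\in m_0Z\iff l\in m_0Z\iff m\in\{1-2m_0k\mid k\in Z\}$. Hence $f(m)\neq 0$ if and only if $m\in W$, which is the claim.

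The principal obstacle is the subgroup property of $E$, and within it the closure under subtraction (equivalently, under negation): producing negative indices is exactly where the ``subtraction-type'' conclusion of Corollary \mref{cor:cor2} (here the one giving $f(1-2k-2m)\neq 0$) must be combined with Theorem \mref{thm:thm2} and bootstrapped through the already-known fact $2a\in E$, whereas the reduction to the even part is routine bookkeeping with Theorem \mref{thm:thm2}. One must also verify in each invocation of Corollary \mref{cor:cor2} that its side conditions ($k,m\neq 0$, the non-vanishing of the relevant products, and the existence of the auxiliary indices $l,n$) genuinely hold — which is precisely what the infinitude of both parities of $S$, established at the start, guarantees.
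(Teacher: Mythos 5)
Your proposal is correct, and it reaches the conclusion by a genuinely different route from the paper. The paper first writes the support as $W=\{2x_k\}\cup\{1-2x_k\}$ with the $x_k$ listed in increasing order, then argues by induction on $k$ that consecutive differences are constant: it shows $x_2=2x_1$ and $x_{k+1}=x_k+x_1$ by a sandwiching argument ($x_{k-1}<x_{k+1}-x_1<x_{k+1}$ forces $x_{k+1}-x_1=x_k$), treats negative indices symmetrically, and finally proves $x_{-1}=-x_1$ to conclude $x_k=km_0$. You instead show that $E=\{k\in Z\mid f(2k)\neq 0\}$ is a subgroup of $(Z,+)$ — doubling via conclusion 6) of Corollary \mref{cor:cor2}, closure under subtraction via conclusion 10) combined with Theorem \mref{thm:thm2}, and negation by bootstrapping $-b=b-2b$ — and then invoke $E=m_0Z$. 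What your route buys is a shorter argument with no ordering, no induction, and no need to know in advance that the support is unbounded in both directions (something the paper's enumeration of $W$ by a doubly infinite ordered index set tacitly presupposes); what the paper's route buys is self-containedness, since the subgroup classification you cite is itself a minimal-positive-element argument of the same flavor as the paper's sandwiching step. Both proofs rest on the same three ingredients — Lemma \mref{lem:rbd6} for infinitude of the support, Theorem \mref{thm:thm2} for the parity-reversing involution $m\mapsto 1-m$, and Corollary \mref{cor:cor2} for closure of the support under the relevant additive combinations — and your explicit check that the auxiliary indices $l,n$ required by Corollary \mref{cor:cor2} always exist (because both parities of the support are infinite) makes precise a point the paper leaves implicit.
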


\begin{proof}  From  Theorem \mref{thm:thm2}, there exists $W=\{2x_k | k\in Z\}\cup \{1-2x_k | k\in Z\}\subset Z$ satisfying that $f(m)\neq 0$ if and only if  $m\in W$.
Thanks to Lemma \mref{lem:rbd6} and   \mref{lem:rbd7},  $W$ is an infinite subset of $Z$. From  Corollary \mref{cor:cor2}, we can suppose that for all $k, s\in Z$, $2x_k <2x_s$ if  and only if $ k <s$,  and
 $x_{-1}< 0 <x_{1} $.

By the result 2) of Corollary \mref{cor:cor2} and  $2x_{2}\in W, -2x_{1}+1\in W$,  we have $2(x_{2}-x_{1})\in W$. Thanks to $0<x_{2}-x_1<x_{2}$,  $x_2=2x_1$.

Now suppose that $x_{k}-x_{k-1}=x_{1}$, for $k> 0$, that is, $x_{k}=kx_{1}$. Since $2x_{k+1}\in W, 2x_{k-1}\in W, -2x_{k}+1\in W$, by the result 8) in Corollary \mref{cor:cor2}, we have $2(x_{k+1}-x_{k}+x_{k-1})=2(x_{k+1}-x_{1})\in W$. Thanks to  $x_{k+1}-x_{1}<x_{k+1}$ and $x_{k-1}=x_k-x_1<x_{k+1}-x_{1}$,   $x_{k-1} < x_{k+1}-x_{1} < x_{k+1}$. Therefore,  $x_{k+1}-x_1=x_{k}$, that is,  $x_{k+1}=x_k+x_1=(k+1)x_{1}$.

By the completely similar discussion, we have that  for all $k<0$, $x_k=-kx_{-1}$.

Since $2x_{-1}\in W, 2x_{1}\in W$, from the result 1) in Corollary \mref{cor:cor2}, we have that $2(x_{-1}+x_{1})\in W$. From $x_{-1} <0 <x_{1}$, and
$x_{-1} < x_{-1}+x_{1} < x_1$, we have $x_{-1}+x_{1}=0$, that is,  $x_{-1}=-x_1.$ Denote $m_0=x_1$. Then for all $2x_k\in W$, $2x_k=2m_0k$.

\end{proof}

\vspace{2mm}

For positive integer $m_0$, denote
$$W_{m_0}=\{2m_{0}k~ |~ k\in Z\}\cup \{1-2m_{0}k~ |~ k\in Z\}.$$
 If $f$ satisfies that $f(m)\neq 0$ if and only if $m\in W_{m_0}$, then $W_{m_0}$ is called an {\bf $m_{0}-$ supporter of $R$.}

\begin{coro} Let $R$ be a Homogeneous Rota-Baxter operator with $f(0)=-f(1)=1$. If there exist integer $k$ such that $f(2k)\neq 0$, then we have
$f(-2k)\neq 0$ and  $f(1+2k)\neq 0$.

\mlabel{cor:cor3}
\end{coro}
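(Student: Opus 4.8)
The plan is to reduce the statement to the fourth relation of Corollary \mref{cor:cor1} together with Theorem \mref{thm:thm2}, treating the degenerate value $k=0$ separately. First dispose of $k=0$: here $f(2k)=f(0)=1$, $f(-2k)=f(0)=1$ and $f(1+2k)=f(1)=-1$, so all three values are nonzero and nothing more is needed.

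Now assume $k\neq 0$, so that $k\neq -k$ and relation 4) of Corollary \mref{cor:cor1} applies with $m=k$, $n=-k$. Since $2m+2n=0$ and $f(0)=1$, this gives
\[
f(2k)f(-2k)=\big(f(2k)+f(-2k)-f(2k)f(-2k)\big)f(0)=f(2k)+f(-2k)-f(2k)f(-2k),
\]
i.e. $2f(2k)f(-2k)=f(2k)+f(-2k)$. If $f(-2k)=0$ held, this equation would force $f(2k)=0$, contradicting the hypothesis $f(2k)\neq 0$; hence $f(-2k)\neq 0$. Finally, applying Eq.~(\mref{eq:rf35}) of Theorem \mref{thm:thm2} with $m=-2k$ yields $f(1+2k)+f(-2k)=0$, so $f(1+2k)=-f(-2k)\neq 0$, as claimed.

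The argument is short, and the only point that needs care is the index restriction $m\neq n$ in Corollary \mref{cor:cor1}, which is precisely why the case $k=0$ must be handled on its own; there is no real obstacle. As an alternative route, one may invoke Theorem \mref{thm:thm3}: for $k\neq 0$ the integer $2k\neq 0,1$ satisfies $f(2k)\neq 0$, so $f$ has an $m_0$-supporter $W_{m_0}$; since $2k$ is even it must be of the form $2m_0j$ for some $j\in Z$, whence $-2k=2m_0(-j)\in W_{m_0}$ and $1+2k=1-2m_0(-j)\in W_{m_0}$, giving $f(-2k)\neq 0$ and $f(1+2k)\neq 0$.
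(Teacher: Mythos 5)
Your proof is correct. The paper disposes of this corollary in one line by citing Theorem \mref{thm:thm3} (the $m_0$-supporter structure theorem) together with Theorem \mref{thm:thm2} — essentially the ``alternative route'' you sketch at the end, which is also valid since $1-2m_0j$ is always odd, so a nonvanishing even index $2k$ must lie in the even part $\{2m_0j\}$ of the supporter, and that set is closed under negation. Your main argument is genuinely more elementary: instead of the full classification of the support set, you only use relation 4) of Corollary \mref{cor:cor1} with $m=k$, $n=-k$ to get $2f(2k)f(-2k)=f(2k)+f(-2k)$, which immediately rules out $f(-2k)=0$, and then the parity-free symmetry $f(1-m)=-f(m)$ of Theorem \mref{thm:thm2} for the odd value. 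This buys a shorter logical path (Theorem \mref{thm:thm3} is by far the most involved result in that subsection) and makes the corollary usable independently of it; you are also more careful than the paper in isolating the degenerate case $k=0$, where the restriction $m\neq n$ in Corollary \mref{cor:cor1} would otherwise be violated. No gaps.
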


\begin{proof}
The result follows from Theorem \mref{thm:thm3} and Theorem  \mref{thm:thm2}, directly.
\end{proof}

\begin{lemma}
Let $R$ be a homogeneous Rota-Baxter operator with  $f(0)=-f(1)=1$, and $W_{m_0}$ be its $m_0$-supporter.
 Then   we have $f(2m_0)\neq \frac{1}{2}$, and for all $k, k_1, k_2, k_3\in Z$,

\begin{equation}
\frac{1}{2f(2m_{0}k)}+\frac{1}{2f(-2m_{0}k)}=1, ~ \frac{1}{2f(2m_{0}k)}-\frac{1}{2f(1+2m_{0}k)}=1,
\mlabel{eq:rf36}
\end{equation}

\begin{equation}
\frac{1}{f(2m_{0}k_{2})}+\frac{1}{f(2m_{0}k_{3})}=\frac{1}{f(2m_{0}k_{1})}+\frac{1}{f(-2m_{0}k_{1}+2m_{0}k_{2}+2m_{0}k_{3})}, \quad  k_2\neq k_3.
\mlabel{eq:rf37}
\end{equation}

\mlabel{lem:rbd8}
\end{lemma}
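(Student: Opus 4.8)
The plan is to convert the multiplicative relations of Corollary \mref{cor:cor1} into the additive (reciprocal) relations claimed, using that on the supporter $W_{m_0}$ every value $f(m)$ is nonzero and hence invertible. First I would record the relevant special cases of Corollary \mref{cor:cor1}: part 3) with $m=-l$ (so that $l\neq 0$ forces $k\neq 0$, and the argument $2l+2n+1$ with $n$ chosen appropriately), part 1) with the pair $(l,m)$ replaced by indices in $W_{m_0}$, and part 4) likewise. Since each of these is an equation of the shape $ab = (a+b\pm ab)c$ with $a,b,c$ all nonzero, dividing through by $abc$ turns it into $\frac{1}{c} = \frac{1}{b}+\frac{1}{a}\pm 1$, i.e. exactly the reciprocal identities in Eqs.~(\mref{eq:rf36}) and (\mref{eq:rf37}). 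Concretely, Eq.~(\mref{eq:rf37}) should come from part 1) (or part 4), for the odd/even parities) applied with $2l+1$ and $2m+1$ — or $2m$, $2n$ — replaced by two elements $2m_0k_2$, $2m_0k_3$ of the supporter, their "sum minus the base point" $2m_0k_1$ being again in $W_{m_0}$ by Theorem \mref{thm:thm3}; the condition $k_2\neq k_3$ is exactly the $l\neq m$ (resp.\ $m\neq n$) hypothesis needed to keep the relevant determinant nonzero. The two identities in Eq.~(\mref{eq:rf36}) are the degenerate instances: the first from the relation tying $f(2m_0k)$, $f(-2m_0k)$ and $f(0)=1$ together (so $c=f(0)=1$ on one side after using $f(0)=-f(1)=1$), and the second from the relation tying $f(2m_0k)$, $f(1+2m_0k)$ and again a value equal to $\pm 1$.

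Next I would derive $f(2m_0)\neq \frac12$. Suppose to the contrary $f(2m_0)=\frac12$. Apply the first identity of Eq.~(\mref{eq:rf36}) with $k=1$: it gives $1 + \frac{1}{2f(-2m_0)} = 1$, forcing $\frac{1}{2f(-2m_0)}=0$, which is impossible since $f(-2m_0)\in F$ is finite (equivalently, $f(-2m_0)\neq 0$ because $-2m_0\in W_{m_0}$, but $1/(2f(-2m_0))=0$ has no solution). Hence $f(2m_0)\neq\frac12$. One must be slightly careful that the instance of Corollary \mref{cor:cor1} used here is legitimate, i.e.\ that the index restrictions ($l\neq 0$, $m\neq 0$, etc.) are met; choosing $k=1$ and pairing $2m_0$ with $-2m_0$ (or with $0$ via $f(0)=1$) keeps all the excluded-index conditions satisfied, since $m_0\geq 1$.

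The main obstacle I anticipate is purely bookkeeping: matching each claimed identity to the correct one of the four parts of Corollary \mref{cor:cor1} \emph{with the right parity and the right choice of free indices}, and verifying in each case that the side conditions "$l\neq m$", "$m\neq n$", "$m\neq 0$", "$l\neq 0$" (which guarantee the determinants $D(\cdots)\neq 0$, cf.\ Lemma \mref{lem:rbd4}) are not violated when we specialize the arguments to lie in $W_{m_0}$. Once the correct specializations are pinned down, every step is a one-line division of a nonzero equation by the product of its (nonzero) factors, so no further estimates or structural arguments are needed; the content is entirely in Corollary \mref{cor:cor1}, Theorem \mref{thm:thm2} and Theorem \mref{thm:thm3}, all of which are available.
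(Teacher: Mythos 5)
Your treatment of Eq.~(\mref{eq:rf36}) and of $f(2m_0)\neq\frac12$ matches the paper: the first identity is Corollary \mref{cor:cor1}\,4) with $n=-m$ and $f(0)=1$, the second then follows from $f(1+2m_0k)=-f(-2m_0k)$ (Theorem \mref{thm:thm2}), and $f(2m_0)\neq\frac12$ drops out of the first identity at $k=1$ exactly as you argue. The general principle --- divide a product identity with nonzero factors by the product to get a reciprocal identity --- is also the paper's.

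However, your derivation of Eq.~(\mref{eq:rf37}) has a genuine gap. The identities in Corollary \mref{cor:cor1} involve only \emph{two} free indices and their sum, so after division they yield
$\frac{1}{f(2m_0k_2)}+\frac{1}{f(2m_0k_3)}=1+\frac{1}{f(2m_0(k_2+k_3))}$,
which is only the $k_1=0$ instance of Eq.~(\mref{eq:rf37}); the phrase ``their sum minus the base point $2m_0k_1$'' does not correspond to anything appearing in parts 1)--4). You can recover general $k_1$ by chaining two such instances through the common value $\frac{1}{f(2m_0(k_2+k_3))}$, but the second application needs $k_1\neq k_2+k_3-k_1$, so the case $2k_1=k_2+k_3$ (e.g.\ $k_2=0$, $k_3=2$, $k_1=1$, which asserts $1+\frac{1}{f(4m_0)}=\frac{2}{f(2m_0)}$) is left unproved --- and that case carries real content, since it is what forces $1/f(2m_0k)$ to be affine in $k$. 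The paper avoids this by going back to the \emph{three}-index identity (\mref{eq:rf34}) of Lemma \mref{lem:rbd5} with $2l+1=1-2m_0k_1$, $2m=2m_0k_2$, $2n=2m_0k_3$: the only hypothesis needed there is $k_2\neq k_3$, and after substituting $f(1-2m_0k_1)=-f(2m_0k_1)$ (Theorem \mref{thm:thm2}) and dividing by the nonzero product, Eq.~(\mref{eq:rf37}) follows for all $k_1$. You should either invoke Lemma \mref{lem:rbd5} at this point or supply a separate argument for the midpoint case.
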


\begin{proof} By the result 4) in Corollary \mref{cor:cor1}, for all $k\in Z$ and $k\neq 0$, we have

$$f(2m_{0}k)f(-2m_{0}k)=f(2m_{0}k)+f(-2m_{0}k)-f(2m_{0}k)f(-2m_{0}k).$$
It follows Eq. (\mref{eq:rf36}), and $f(2m_0)\neq \frac{1}{2}$.

 According to  Lemma \mref{lem:rbd5} and Theorem \mref{thm:thm2}, for all $m, n\in Z$ and $m\neq n$, we have

$$-f(2l)f(2m)f(2n)=\{-f(2l)f(2m)-f(2l)f(2n)+f(2m)f(2n)\}f(-2l+2m+2n).$$
Then in the case $l=m_{0}k_{1}, m=m_{0}k_{2}, n=m_{0}k_{3}$,  we obtain  Eq. (\mref{eq:rf37}).

\end{proof}

\begin{theorem}
Let $R: A_{\omega}\rightarrow A_{\omega}$ be a linear map defined as Eq. (\mref{eq:rf31}) with  $f(0)=-f(1)=1$. Then $R$ is a homogeneous Rota-Baxter operator on $A_{\omega}$
if and only if $f(m)=0$ for all $m\in Z$, $m\neq 0, 1$; or  there exists a positive integer $m_0$ and an element $a\in F$,  $a\neq \frac{k-1}{k}$ for $k\in Z-\{0\}$, such that $W_{m_0}$ is an $m_0$-supporter  of $R$ and

\begin{equation}
f(2m_0k)=-f(1-2m_0k)=\frac{1}{ka-(k-1)}, ~\forall ~  k\in Z.
\mlabel{eq:rf38}
\end{equation}

Further, in the case $m_0=1$, $R$ is an invertible   Rota-Baxter operator on $A_{\omega}$, therefore, $R^{-1}$ is an invertible derivation of
$A_{\omega}$, and
$$R^{-1}(L_{2k})=(ka-(k-1))L_{2k},\quad  R^{-1}(L_{1-2k})=(-ka+(k-1))L_{1-2k},\quad \forall k\in Z.$$

\mlabel{thm:thm4}
\end{theorem}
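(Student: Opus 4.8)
The plan is to convert the multiplicative functional equations from Corollary \mref{cor:cor1} and Lemma \mref{lem:rbd8} into a linear recurrence for the reciprocals $g(k):=1/f(2m_0k)$, solve that recurrence, and then read off $f$ on the odd part of $W_{m_0}$ via Theorem \mref{thm:thm2}. First I would dispose of the degenerate direction: if $f(m)=0$ for all $m\neq0,1$ one checks \eqref{eq:rf32} holds trivially (both sides vanish unless at least two of $l,m,n$ lie in $\{0,1\}$, and then $D(l,m,n)=0$ by Lemma \mref{lem:rbd4}), so $R$ is a homogeneous Rota-Baxter operator; conversely, this is the case excluded once we assume some $f(m)\neq0$ with $m\neq0,1$. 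So assume the latter. By Theorem \mref{thm:thm3} there is a positive integer $m_0$ with $m_0$-supporter $W_{m_0}$, and everything is now governed by the values $f(2m_0k)$, $k\in Z$, since $f(1-2m_0k)=-f(2m_0k)$ by Theorem \mref{thm:thm2} and $f$ vanishes off $W_{m_0}$.

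The core step is to pin down $f(2m_0k)$. Set $a:=f(2m_0)$; note $a\neq0$ since $2m_0\in W_{m_0}$. Apply \eqref{eq:rf37} with $k_1=k$, $k_2=1$, $k_3=k$ (legitimate since $k_2\neq k_3$ once $k\neq1$), which after writing $g(k)=1/f(2m_0k)$ gives the recurrence
\begin{equation}
g(1)+g(k)=g(k)+g(2-k),\qquad\text{i.e.}\qquad g(2-k)=g(1),
\mlabel{eq:plan1}
\end{equation}
wait — that choice is too weak; instead I would use \eqref{eq:rf37} with $k_1=1$, $k_2=k$, $k_3=1$ to get $g(k)+g(1)=g(1)+g(2(k-1)/\ldots)$, so the clean choice is $k_2=k$, $k_3=1$, $k_1=1$:
\begin{equation}
g(k)+g(1)=g(1)+g(k-1+1-1)\ \Longrightarrow\ \text{use instead } k_1=k-1,\ k_2=k,\ k_3=1,
\mlabel{eq:plan2}
\end{equation}
yielding $g(k)+g(1)=g(k-1)+g(2)$, that is $g(k)-g(k-1)=g(2)-g(1)$. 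Thus $g$ is an arithmetic progression in $k$. Since $g(1)=1/a$, writing $d:=g(2)-g(1)$ we get $g(k)=1/a+(k-1)d$. To compute $d$, use \eqref{eq:rf36}: $1/(2f(2m_0))+1/(2f(-2m_0))=1$ gives $\tfrac12(g(1)+g(-1))=1$, so $g(-1)=2-g(1)=2-1/a$; but the arithmetic formula gives $g(-1)=1/a-2d$, hence $d=(1/a-1)$, i.e. $d=(1-a)/a$. Therefore $g(k)=1/a+(k-1)(1-a)/a=(1+(k-1)(1-a))/a=(1-(k-1)(a-1))/a$; rearranging, $f(2m_0k)=1/g(k)=a/(1-(k-1)(a-1))=1/(ka-(k-1))$ after multiplying numerator and denominator appropriately — which is exactly \eqref{eq:rf38}. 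The constraint $a\neq (k-1)/k$ for $k\in Z-\{0\}$ is forced because the denominator $ka-(k-1)$ must be nonzero for every $k$ (as $2m_0k\in W_{m_0}$ means $f(2m_0k)\neq0$); and the second identity in \eqref{eq:rf36} is then automatically consistent, as is \eqref{eq:rf37} in general (one should verify \eqref{eq:rf38} satisfies the full system, which is the converse direction — a direct substitution into Lemma \mref{lem:rbd5} using Lemma \mref{lem:rbd4} to handle the cases $D=0$).

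For the converse, I would take $f$ defined by \eqref{eq:rf38} on $W_{m_0}$ and zero elsewhere and verify \eqref{eq:rf32} directly: whenever $D(l,m,n)\neq0$, Lemma \mref{lem:rbd4} tells us the arguments are neither all-odd nor all-even and are pairwise distinct, so at most finitely many sign/parity configurations arise; splitting into the subcases of Lemma \mref{lem:rbd5} (two odd one even, or one odd two even) and substituting the explicit reciprocal-linear form reduces each to the arithmetic identity $g(k_2)+g(k_3)=g(k_1)+g(k_1'+\cdots)$ that holds because $g$ is affine in $k$. Finally, for $m_0=1$ the support is all of $Z$ (every integer is $2k$ or $1-2k$), so $f$ is nowhere zero and $R$ is bijective; then Lemma \mref{lem:rbd1} with $\lambda=0$ shows $R^{-1}$ is a derivation, and $R^{-1}(L_{2k})=f(2k)^{-1}L_{2k}=(ka-(k-1))L_{2k}$, $R^{-1}(L_{1-2k})=f(1-2k)^{-1}L_{1-2k}=-(ka-(k-1))L_{1-2k}$, giving the stated formulas.

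The main obstacle I anticipate is organizing the converse verification cleanly: because \eqref{eq:rf32} ranges over all triples but is vacuous exactly when $D(l,m,n)=0$, one must use Lemma \mref{lem:rbd4} carefully to enumerate which triples actually impose conditions, and then check that the reciprocal-linear ansatz for $f$ satisfies every instance — the bookkeeping over parities and the distinctness constraints $m\neq l$, $m\neq n$ in Lemma \mref{lem:rbd5} is where a naive argument could miss a case. The forward direction, by contrast, is essentially the single observation that \eqref{eq:rf37} forces $1/f(2m_0k)$ to be an arithmetic progression, with \eqref{eq:rf36} fixing its common difference.
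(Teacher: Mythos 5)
Your route is essentially the paper's: invoke Theorem \mref{thm:thm3} to get the $m_0$-supporter, use Lemma \mref{lem:rbd8} to show the reciprocals $g(k)=1/f(2m_0k)$ are affine in $k$ (the paper packages this as a two-sided induction on $k$ using Eq.~(\mref{eq:rf37}) with $k_3=1$ resp.\ $k_3=-1$), then verify the converse by the parity case analysis and deduce the $m_0=1$ statement from Lemma \mref{lem:rbd1}. Two slips are worth repairing. First, the recurrence $g(k)-g(k-1)=g(2)-g(1)$ that you extract from Eq.~(\mref{eq:rf37}) with $(k_1,k_2,k_3)=(k-1,k,1)$ is only available for $k\neq 1$ (the lemma requires $k_2\neq k_3$), so it links the progressions on $k\geq 1$ and on $k\leq 0$ only separately; your step $g(-1)=g(1)-2d$ silently uses the excluded $k=1$ instance. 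The gap closes easily — e.g.\ from $g(0)=1/f(0)=1$ and the $k=0$ instance one gets $g(-1)=1-d$, and then Eq.~(\mref{eq:rf36}) forces $g(1)=1+d$ — but as written the deduction is circular. Second, with your normalization $a:=f(2m_0)$ the asserted simplification $a/(1-(k-1)(a-1))=1/(ka-(k-1))$ is false (it would force $a^2=1$); the theorem's parameter is $1/f(2m_0)$, so you must substitute $a\mapsto 1/a$ before matching Eq.~(\mref{eq:rf38}), exactly as the paper does by setting $f(2m_0)=\frac1a$ at the outset. Also, in your disposal of the trivial case, when exactly two of $l,m,n$ equal $0,1$ the determinant $D(0,1,n)$ is generally nonzero; both sides of Eq.~(\mref{eq:rf32}) still vanish there because $f(l+m+n-1)=f(n)=0$, not because $D=0$. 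None of this changes the architecture, and your converse is the same explicit three-case computation the paper carries out.
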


\begin{proof}
If $R$ is a  homogeneous Rota-Baxter operator on $A_{\omega}$ and there exists $m\neq 0, 1$ such that $f(m)\neq 0$, then by Theorem \mref{thm:thm3},
there exists a positive integer $m_0$ such that $W_{m_0}$ is an $m_0$-supporter  of $R$.  Suppose $f(2m_0)=\frac{1}{a}$, then by Lemma \mref{lem:rbd8}
$a\neq 2$.

Now suppose that for positive integer $k$ satisfies $f(2m_0k)=\frac{1}{ka-(k-1)}$. By  Lemma \mref{lem:rbd8},
$$
\frac{1}{f(2m_0(k+1))}+1=\frac{1}{f(2m_0k)}+\frac{1}{f(2m_0)}=ka-(k-1)+a,
$$
that is,  $f(2m_0(k+1))=\frac{1}{(k+1)a-k}$, and $a\neq \frac{k-1}{k}$.

Since $$
\frac{1}{f(2m_0)}+\frac{1}{f(-2m_0)}=2,
$$
we have $f(-2m_0)=\frac{1}{2-a}=\frac{1}{-a-(-1-1)}.$ Now suppose that for negative integer $k$,   $f(2m_{0}k)=\frac{1}{ka-(k-1)}$. From
$$
\frac{1}{f(2m_0(k-1))}+1=\frac{1}{f(2m_0k)}+\frac{1}{f(-2m_0)}=ka-(k-1)+2-a,
$$
we have
$$f(2m_0(k-1))=\frac{1}{(k-1)a-(k-2)}, \quad \text{ and }\quad a\neq \frac{k-2}{k-1}.$$ It follows Eq. (\mref{eq:rf38}).

Conversely, since for all $2l, 2m, 2n\notin W_{m_{0}}, l\neq m$,

$$f(\pm 2l)=f(\pm 2m)=f(\pm 2n)=0, \quad f(1\pm 2l)=f(1\pm 2m)=f(1\pm 2n)=0, $$
the identity (\mref{eq:rf32}) holds. So we only need to prove that Eq. (\mref{eq:rf32}) holds for the following cases.

\vspace{2mm}1) The case
 $2l, 2m\notin W_{m_0}, 2n \in W_{m_0}$, $l\neq m$.
 By Theorem \mref{thm:thm3} and Theorem \mref{thm:thm2}

 \vspace{2mm}$f(\pm 2l)=f(\pm 2m)=$
$f(1\pm 2l)=f(1\pm 2m)=0,$
 $f(\pm 2m\pm 2n\pm 2l)=f(\pm 2m\pm 2n\pm 2l+1)=0$.
  Then Eq. (\mref{eq:rf32}) holds.

 \vspace{2mm} 2) For the case
 $2l\notin W_{m_0}, 2m, 2n \in W_{m_0}$, and  $m\neq n$.
 We have  $f(\pm 2l)=f(1\pm 2l)=0$,  $f(\pm 2l\pm 2m\pm 2n)=f(1 \pm 2l\pm 2m\pm 2n)=0.$
   Then Eq. (\mref{eq:rf32}) holds.

\vspace{2mm}3) For the case
 $2l, 2m, 2n \in W_{m_0}$, $l\neq m, l\neq n$ and $m\neq n$ .
Suppose
 $2l=2m_0k_1,$ $ 2m=2m_0k_2,$ $ 2n=2m_0k_3\in W_{m_0}$.  From
$$
f(1-2l)f(2m)f(2n)=\frac{-1}{k_1a-(k_1-1)}\frac{1}{k_2a-(k_2-1)} \frac{1}{k_3a-(k_3-1)},
$$

$$
(f(1-2l)f(2m)+f(1-2l)f(2n)
+f(2m)f(2n))f(2(m+n-l))
$$
$$
=(\frac{-1}{k_1a-(k_1-1)} \frac{1}{k_2a-(k_2-1)}+\frac{-1}{k_1a-(k_1-1)}\frac{1}{k_3a-(k_3-1)}
$$
$$
+\frac{1}{k_2a-(k_2-1)} \frac{1}{k_3a-(k_3-1)}) \frac{1}{(-k_1+k_2+k_3)a-(-k_1+k_2+k_3-1)}
$$
$$
=\frac{-1}{k_1a-(k_1-1)} \frac{1}{k_2a-(k_2-1)}\frac{1}{k_3a-(k_3-1)},
$$

$$
f(1-2l)f(1-2m)f(2n)=\frac{-1}{k_1a-(k_1-1)} \frac{-1}{k_2a-(k_2-1)}  \frac{1}{k_3a-(k_3-1)},
$$

$$
(f(1-2l)f(1-2m)+f(1-2l)f(2n)+f(1-2m)f(2n))f(1-2(l+m-n))
$$
$$
=(\frac{-1}{k_1a-(k_1-1)}\frac{-1}{k_2a-(k_2-1)}+\frac{-1}{k_1a-(k_1-1)} \frac{1}{k_3a-(k_3-1)}
$$
$$
+\frac{-1}{k_2a-(k_2-1)}\frac{1}{k_3a-(k_3-1)}) \frac{-1}{(k_1+k_2-k_3)a-(k_1+k_2-k_3-1)}
$$
$$
=\frac{1}{k_1a-(k_1-1)}\frac{1}{k_2a-(k_2-1)}  \frac{1}{k_3a-(k_3-1)},
$$
\\
identity (\mref{eq:rf32}) holds.

Summarizing above discussion, we obtain the result.
\end{proof}

 Let $m_0=1$, and $a=3$. By Theorem \mref{thm:thm4}, the  linear map $R: A_{\omega}\rightarrow A_{\omega}$ defined by
$$R(L_{2k})=\frac{1}{3k-(k-1)}L_{2k}=\frac{1}{2k+1}L_{2k}, \quad R(L_{1-2k})=-\frac{1}{2k+1}L_{1-2k},  k\in Z,$$
 is a homogeneous Rota-Baxter operator on $A_{\omega}$, and  $R$ is a invertible Rota-Baxter operator. Therefore,
 $D=R^{-1}:  A_{\omega}\rightarrow A_{\omega}$ satisfying $$D(L_{2k})=(2k+1)L_{2k}, \quad D(L_{1-2k})=-(2k+1)L_{1-2k},  k\in Z,$$
 is an invertible  derivation of $A_{\omega}$.

If $m_0=3$, and $a=\sqrt{2}$. Then  the linear map $R: A_{\omega}\rightarrow A_{\omega}$ defined by
$$R(L_{6k})=\frac{1}{\sqrt{2}k-(k-1)}L_{6k}, \quad R(L_{1-6k})=-\frac{1}{\sqrt{2}k-(k-1)}L_{1-6k},  k\in Z,$$
 and others are zero, is a homogeneous Rota-Baxter operator on $A_{\omega}$.  But $R$ is degenerate.

\vspace{3mm}\noindent 3.2.2  {\bf Homogeneous Rota-Baxter operators with  $f(0)=f(1)=0 $}

In this section we discuss the case $f(0)=f(1)=0$.

\begin{lemma}  Let $R: A_{\omega} \rightarrow A_{\omega}$ be a homogeneous Rota-Baxter operator  on $A_{\omega}$ with
$f(0)=f(1)=0$. Then $R$  satisfies that for all $l, m, n\in Z$,

\vspace{2mm}1) $f(2l+1)f(2m+1)f(2l+2m+1)=0,\quad  l\neq m$.

\vspace{2mm}2) $f(2m+1)f(2n)f(2m+2n+1)=0, \quad ~ m\neq 0$.

\vspace{2mm}3) $f(2l+1)f(2m)f(2l+2m)=0,~  \quad~   m\neq 0$.

\vspace{2mm}4) $f(2m)f(2n)f(2m+2n)=0, ~  \quad ~ m\neq n$.
\mlabel{lem:rbd01}
\end{lemma}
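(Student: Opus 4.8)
The plan is to derive all four identities as specializations of the two relations Eq.~(\mref{eq:rf33}) and Eq.~(\mref{eq:rf34}) supplied by Lemma~\mref{lem:rbd5}, which applies here since $f(0)+f(1)=0$. In each case the recipe is to substitute $0$ for one free integer parameter so that one factor on the left-hand side becomes $f(0)$ or $f(1)$ and hence vanishes; the bracketed expression on the right then collapses to a single triple product of exactly the shape appearing in the statement. One must check only that the non-degeneracy hypothesis carried along from Lemma~\mref{lem:rbd5} (namely $l\neq m$ in Eq.~(\mref{eq:rf33}) and $m\neq n$ in Eq.~(\mref{eq:rf34}), which by Lemma~\mref{lem:rbd4} is precisely the condition that the relevant determinant be nonzero) specializes to the inequality stated in the corresponding item.

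Concretely, for 1) I would put $n=0$ in Eq.~(\mref{eq:rf33}): since $f(0)=0$ the left side is $0$ and the right side reduces to $f(2l+1)f(2m+1)f(2l+2m+1)$, with the hypothesis $l\neq m$ intact. For 2) I would set the second odd slot equal to $1$, i.e.\ take the index $m$ to be $0$ in Eq.~(\mref{eq:rf33}) and then relabel $l$ as $m$; here $f(1)=0$ annihilates the left side, the right side becomes $f(2m+1)f(2n)f(2m+2n+1)$, and the surviving hypothesis $l\neq 0$ turns into $m\neq 0$. For 3) I would put $n=0$ in Eq.~(\mref{eq:rf34}): $f(0)=0$ forces the left side to vanish, the right side becomes $f(2l+1)f(2m)f(2l+2m)$, and the hypothesis $m\neq n$ becomes $m\neq 0$. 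For 4) I would put $l=0$ in Eq.~(\mref{eq:rf34}): $f(1)=0$ kills the left side, the right side becomes $f(2m)f(2n)f(2m+2n)$, and the hypothesis $m\neq n$ is kept verbatim.

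There is no genuine obstacle; the only point requiring care is the parity and index bookkeeping, i.e.\ verifying that after each substitution the argument of $f$ produced on the right is exactly $2l+2m+1$, $2m+2n+1$, $2l+2m$, or $2m+2n$ as claimed, and that the inherited inequality matches the stated one. As a sanity check I would also note that the four identities can be obtained directly from the master relation Eq.~(\mref{eq:rf32}) by taking one of $l,m,n$ equal to $0$ or $1$ and invoking Lemma~\mref{lem:rbd4} to confirm $D(l,m,n)\neq 0$ in each resulting configuration; this is the route I would use if I preferred not to cite the intermediate Eqs.~(\mref{eq:rf33}) and (\mref{eq:rf34}).
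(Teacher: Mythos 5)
Your proposal is correct and follows essentially the same route as the paper: the paper's one-line proof cites exactly the nonvanishing of $D(2l+1,2m+1,0)$, $D(1,2m+1,2n)$, $D(2l+1,2m,0)$, $D(1,2m,2n)$ together with Lemma~\ref{lem:rbd5}, which amounts to the same specializations of Eqs.~(\ref{eq:rf33}) and (\ref{eq:rf34}) at the indices $0$ and $1$ that you carry out. Your version merely makes the index bookkeeping explicit.
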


\begin{proof} The result follows from  $D(2l+1, 2m+1, 0)\neq 0$, $D(1, 2m+1, 2n)\neq 0$, $D(2l+1, 2m, 0)\neq 0$, $D(1, 2m, 2n)\neq 0$,
and Lemma \mref{lem:rbd5}.

\end{proof}

\begin{coro} Let $R: A_{\omega} \rightarrow A_{\omega}$ be a  homogeneous Rota-Baxter operator  with
$f(0)=f(1)=0$, and there exist $k, l, m, n\in Z$ such that   $(k-l)(m-n)klmn\neq 0$, $f(2k)\neq 0,$ $ f(2l)\neq 0, $ $f(2m+1)\neq 0, $ $f(2n+1
)\neq 0. $ Then  we have

\vspace{2mm} 1) $f(2k+2l)=0$,\quad  2) $f(2k+2m)=0$, \quad 3) $f(2k+2m+1)=0$,

\vspace{2mm} 4) $f(2m+2n+1)=0$, \quad 5) $f(2m+2n+2k+1)\neq 0$,

\vspace{2mm} 6) $f(2m+2k+2l)\neq 0$, \quad 7) $f(2k-2m)=0$, $k\neq -m$, \quad 8) $f(4k)=0$.
\mlabel{cor:cor01}
\end{coro}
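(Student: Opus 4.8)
The plan is to obtain all eight statements purely by substitution into identities already established, namely the vanishing products of Lemma \mref{lem:rbd01} and the full relations (\mref{eq:rf33}) and (\mref{eq:rf34}) of Lemma \mref{lem:rbd5}; no fresh computation with the Rota-Baxter identity is required. Throughout, the hypothesis $(k-l)(m-n)klmn\neq 0$ is exactly what licenses the index-distinctness and nonvanishing side conditions attached to those identities.

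For the vanishing statements 1)--4) I would plug the given nonzero values of $f$ into the appropriate item of Lemma \mref{lem:rbd01} and cancel. Explicitly: item 4) applied to the pair $2k,2l$ (allowed since $k\neq l$) gives $f(2k)f(2l)f(2k+2l)=0$, hence 1); item 3) with odd index $2m+1$ and even index $2k$ (allowed since $k\neq 0$) gives $f(2m+1)f(2k)f(2k+2m)=0$, hence 2); item 2) with odd index $2m+1$ and even index $2k$ (allowed since $m\neq 0$) gives $f(2m+1)f(2k)f(2k+2m+1)=0$, hence 3); and item 1) applied to the odd pair $2m+1,2n+1$ (allowed since $m\neq n$) gives $f(2m+1)f(2n+1)f(2m+2n+1)=0$, hence 4). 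In every line the two displayed factors are nonzero by assumption, so the remaining one vanishes.

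For the nonvanishing statements 5) and 6) I would instead use the full relations, whose left-hand sides are products of three values of $f$: once those three values are nonzero the left side is nonzero, which forces the single factor $f(\cdot)$ on the right to be nonzero. Concretely, (\mref{eq:rf33}) under the substitution $(2l+1,2m+1,2n)\mapsto(2m+1,2n+1,2k)$ (valid since $m\neq n$) reads $f(2m+1)f(2n+1)f(2k)=(\cdots)\,f(2m+2n+2k+1)$ with nonzero left side, giving 5); and (\mref{eq:rf34}) under $(2l+1,2m,2n)\mapsto(2m+1,2k,2l)$ (valid since $k\neq l$) reads $f(2m+1)f(2k)f(2l)=(\cdots)\,f(2m+2k+2l)$, again with nonzero left side, giving 6).

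Statement 7) is trivial when $k=m$ (it reads $f(0)=0$); for $k\neq m$ I would apply item 3) of Lemma \mref{lem:rbd01} with odd index $2m+1$ and even index $2(k-m)$ (allowed since $k-m\neq 0$), which yields $f(2m+1)\,f(2k-2m)\,f(2k)=0$, and cancelling $f(2m+1)$ and $f(2k)$ gives $f(2k-2m)=0$. Finally, 8) is the one item needing two steps: I would substitute $(2l+1,2m,2n)\mapsto(2m+1,2k,2(k-m))$ into (\mref{eq:rf34}) (valid since $k\neq k-m$, i.e. $m\neq 0$); the right-most index is $2m+2k+2(k-m)=4k$, and since $f(2k-2m)=0$ by 7) the identity collapses to $f(2m+1)f(2k)f(4k)=0$, whence $f(4k)=0$. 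The work here is essentially bookkeeping --- the only thing to watch is that the substitution chosen for each claim genuinely meets its side condition --- and the single conceptual point is that 8) must be bootstrapped off 7) rather than read off directly.
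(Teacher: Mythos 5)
Your proposal is correct: every substitution meets its side condition, and items 1)--6) are obtained exactly as in the paper (1)--4) from the four vanishing products of Lemma \mref{lem:rbd01}, 5) and 6) from the full relations (\mref{eq:rf33}) and (\mref{eq:rf34}) with nonzero left-hand sides). Where you genuinely improve on the paper is in 7) and 8). The paper merely asserts that 7) and 8) follow from items 4) and 3) of Lemma \mref{lem:rbd01} respectively, but neither attribution works as a one-step substitution: item 4) needs two even indices with nonvanishing $f$ summing to $2k-2m$, which the hypotheses do not supply, and a direct use of item 3) for $f(4k)=0$ would require knowing $f(1-2k)\neq 0$, a fact only established later in Lemma \mref{lem:rbd02} --- whose proof in turn cites item 8) of this corollary, so that route would be circular. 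Your derivation of 7) from item 3) with the even index $2(k-m)$ (plus the trivial case $k=m$, which the hypotheses do not exclude), and your bootstrapping of 8) from 7) via (\mref{eq:rf34}) with indices $(2m+1,2k,2(k-m))$, is self-contained and fills exactly this gap; it also shows in passing that the side condition $k\neq-m$ in 7) is not actually needed.
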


\begin{proof} The result 1), 2), 3) and 4)  follow from the result 4), 3), 2) and 1) in Lemma \mref{lem:rbd01},  respectively.
 The result 5) and 6) follow from   Eq. (\mref{eq:rf33}) and Eq. (\mref{eq:rf34}),  respectively. The result 7) and 8) follow from the result 4) and 3) in Lemma \mref{lem:rbd01}, respectively.

\end{proof}

\begin{theorem}

Let $R: A_{\omega} \rightarrow A_{\omega}$ be  a homogeneous Rota-Baxter operator  with
$f(0)=f(1)=0$, and there exist $m_1, \cdots, m_s\in Z$ such that $f(m_i)\neq 0$ and $f(m)=0$ for all $m\neq m_i$, where  $m_i\neq 0, 1$,  $1\leq i\leq s$. Then we
have

\vspace{2mm}1) $s=1$, and then we can suppose $f(m_1)=1$, $f(m)=0$ for all $m\in Z$, $m\neq m_1$.

\vspace{2mm} 2) $s=2$ and $m_1+m_2=1$, so we can suppose that $f(m_1)=1$, $f(1-m_1)=b$, and $f(m)=0$ for all $m\in Z$, $m\neq m_1, 1-m_1$, where  $b\in F$, $b\neq 0$.
\mlabel{thm:thm01}
\end{theorem}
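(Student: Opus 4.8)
The plan is to assume $R$ is a homogeneous Rota-Baxter operator with $f(0)=f(1)=0$ whose support $S=\{m\in Z : f(m)\neq 0\}$ is finite and nonempty (and contained in $Z\setminus\{0,1\}$), and to extract strong multiplicative constraints from Lemma \mref{lem:rbd01} until the only surviving possibilities are $|S|=1$ or $|S|=2$ with $m_1+m_2=1$. First I would separate $S$ into its even part $S_e=\{2k\in S\}$ and odd part $S_o=\{2m+1\in S\}$. The four vanishing identities of Lemma \mref{lem:rbd01} say, roughly, that $S$ is ``sum-free away from the degree shift'': whenever two elements $u,v\in S$ lie in the appropriate parity classes and their combined index $u+v-1$ (or $u+v$, depending on parities) is again forced to be in $Z\setminus\{0,1\}$ and distinct from the relevant indices, then $f$ at that combined index must vanish, i.e. $u+v-1\notin S$ (resp. $u+v\notin S$).

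The key steps, in order: (1) Show $S_e$ has at most one element. If $2k,2l\in S_e$ with $k\neq l$ and $k,l\neq 0$, result 4) of Lemma \mref{lem:rbd01} forces $f(2k+2l)=0$; iterating, combined with result 3) which links $S_e$ and $S_o$, one builds infinitely many nonzero values or a contradiction with finiteness — more carefully, I would use the argument pattern of Corollary \mref{cor:cor01}: from $f(2k)\neq 0$ one deduces $f(4k)=0$, then chase sums to contradict $|S|<\infty$ unless $S_e$ collapses. (2) Similarly bound $S_o$: from result 1), two distinct odd elements $2l+1,2m+1\in S$ force $f(2(l+m)+1)=0$, and one shows the odd support cannot contain two ``independent'' elements without generating an infinite chain. (3) Having reduced to $|S_e|\le 1$ and $|S_o|$ small, treat the mixed interaction via results 2) and 3): an even and an odd element in $S$ force a third index out of $S$, which pins down the remaining freedom. (4) Finally, analyze the residual small cases directly: if $|S|=1$, rescale by Lemma \mref{lem:rbd2} to get $f(m_1)=1$; if $|S|=2$, show the only consistent configuration is $m_1+m_2=1$ — here I would invoke the symmetry already visible in Theorem \mref{thm:thm2}'s proof technique (the $l=0$ or $n=0$ specializations of Eqs. \mref{eq:rf33}, \mref{eq:rf34}) to force the two support points to be ``mirror'' indices $m$ and $1-m$, and then $f(1-m_1)=b$ is unconstrained except $b\neq 0$.

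The main obstacle I expect is step (1)–(2): ruling out all finite supports of size $\ge 3$, and showing size $2$ forces $m_1+m_2=1$, because the vanishing identities in Lemma \mref{lem:rbd01} are only triple-product conditions (each says \emph{some} factor vanishes) rather than clean linear recursions, so one must carefully argue that the ``escape route'' of having the combined index land on $0$ or $1$ (where $f$ is already $0$ and the identity is vacuous) or coincide with an excluded index cannot absorb all the sums. Concretely, the delicate point is that $2k+2l$ or $2m+2n+1$ might equal $0$ or $1$, making the corresponding identity trivially true; I would handle this by noting there are infinitely many available sums $2k+2l'$ as $l'$ ranges over a would-be infinite support, but only finitely many can hit $\{0,1\}$, forcing a genuine contradiction — this is essentially the mechanism of Lemmas \mref{lem:rbd6} and \mref{lem:rbd7} adapted to the $f(0)=f(1)=0$ regime, together with Corollary \mref{cor:cor01} results 5)–8) which show certain sums are \emph{forced nonzero}, directly clashing with finiteness once the support is large enough.
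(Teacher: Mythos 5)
Your high-level plan (constrain a finite support using the vanishing and non-vanishing identities, then rescale by Lemma \mref{lem:rbd2}) points in the right direction, but the toolkit you propose for steps (1)--(3) is not strong enough, and the one idea that actually drives the argument is missing. Lemma \mref{lem:rbd01} only constrains triples of indices that \emph{all} lie in the support, and the forced-nonzero results 5)--6) of Corollary \mref{cor:cor01} require at least three support points of prescribed parities with nonzero half-indices. Consequently your parity decomposition plus iteration cannot exclude exactly the configurations the theorem must exclude: for instance $S=\{3,5\}$ and $S=\{2,3\}$ satisfy every identity in Lemma \mref{lem:rbd01} vacuously or trivially (e.g.\ $f(3)f(5)f(7)=0$ holds because $f(7)=0$), and Corollary \mref{cor:cor01} does not apply to them at all. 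The engine of the paper's proof is different: for two support points $m_i,m_j$ and any third index $m$ \emph{outside} the support with $D(m_i,m_j,m)\neq 0$, Eq.~(\mref{eq:rf32}) collapses to $f(m_i)f(m_j)f(m_i+m_j+m-1)=0$, hence $m_i+m_j+m-1\notin S$. Taking $m=1-m_i$ (for which $D(m_i,m_j,1-m_i)\neq 0$ unless $1-m_i=m_j$, since $m_i$ and $1-m_i$ have opposite parity) forces $f(m_j)=0$, a contradiction unless $m_j=1-m_i$; this is what kills $s=2$ with $m_1+m_2\neq 1$, and for $s\geq 3$ it shows $S$ is closed under $x\mapsto 1-x$ (so $s$ is even, $s\geq 4$), after which aiming $m_1+m_2+m-1$ at the largest support element yields $m_1+m_2=m_i+m_s$ with $i\geq 3$, impossible under the ordering. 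Your appeal to the ``$l=0$ or $n=0$ specializations'' and to Lemmas \mref{lem:rbd6}--\mref{lem:rbd7} does not supply this mechanism: those specializations are precisely Lemma \mref{lem:rbd01}, and Lemmas \mref{lem:rbd6}--\mref{lem:rbd7} work only because in the regime $f(0)=-f(1)=1$ the indices $0,1$ are themselves support points, which is exactly what fails here.

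Two smaller points. The escape route you worry about (the combined index landing on $0$ or $1$) is harmless: the identity there merely asserts $f(0)=0$ or $f(1)=0$, which holds; the genuine degenerate case is $D(m_i,m_j,m)=0$, i.e.\ a parity coincidence, which is why the pair $m_i,m_j$ or the third index must be chosen with the right parities. Also, the clauses ``we can suppose $f(m_1)=1$'' and ``$f(1-m_1)=b$'' implicitly require verifying that the size-one and mirror-pair configurations really are Rota--Baxter operators; the paper checks this directly, and your proposal omits it.
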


\begin{proof} First, if there exists only one $m_1\in Z$, $m_1\neq 0, 1$ such that $f(m_1)\neq 0$ and $f(m)=0$ for all $m\in Z$ and $m\neq m_1$. By Lemma \mref{lem:rbd5} and a direct  computation,  $R$
is a homogeneous Rota-Baxter operator. Thanks to Lemma  \mref{lem:rbd2}, we can suppose $f(m_1)=1.$

Second, if there exist only two distinct integers $m_1, m_2$ satisfying $m_1+m_2=1$ and  $m_1\neq 0, 1$ such that $ f(m_1)\neq 0$, $f(m_2)\neq 0$ and $f(m)=0$ for all $m\in Z$
and $m\neq m_1$, $m\neq m_2$. Then for all $m\in Z$,  we have $D(m_1, m_2, m)\neq 0.$ By a direct computation, for all $l, m, n\in Z$, we have that  $f(l)$, $f(m) $ and $f(n)$
satisfy Eq. (\mref{eq:rf33}) and Eq. (\mref{eq:rf34}). Therefore, $R$ is a homogeneous Rota-Baxter operator.
 By Lemma  \mref{lem:rbd2}, we can suppose $f(m_1)=1$, $f(m_2)=f(1-m_1)=b$, where $b\in F$ and $b\neq 0$.

Third, if $R$ is a homogeneous Rota-Baxter operator satisfying that there exist two distinct integers $m_1, m_2$ such that $f(m_i)\neq 0$ and $f(m)=0$, for all $m\in Z$ and $m\neq m_i,$ $i=1, 2$, where   $m_1, m_2\neq 0, 1$.
 Then there exists $m\in Z$ such that $D(m_1, m_2, m)\neq 0$. Thanks to Lemma \mref{lem:rbd5},
$$f(m_{1}+m_{2}+m-1)=0. $$
Then $m_{1}+m_{2}+m-1\neq m_{1}$ and $m_{1}+m_{2}+m-1\neq m_{2}$, that is, $m\neq 1-m_1$ and $m\neq 1-m_2$. It follows that  $1-m_{1}=m_{2}$.

Lastly,  if $R$ is a homogeneous Rota-Baxter operator satisfying $f(m_i)\neq 0$,  and $f(m)=0$ for all $m\neq m_i$, $1\leq i\leq s$, $s\geq 3$. Then
 for every  $1\leq i\leq s$, $f(1-m_i)\neq 0$.

In fact,  if $f(1-m_1)=0.$ From  $D(m_1, m_2, 1-m_1)\neq 0$, Eq. (\mref{eq:rf33}) and  Eq. (\mref{eq:rf34}), we have  $f(m_{1}+m_{2}+(1-m_1)-1)=f(m_2)=0$. Contradiction. Therefore,
$f(1-m_1)\neq 0.$
From $s\geq 3$, and  similar discussions,  we have that  $f(1-m_i)\neq 0$, for  $1\leq i\leq s$,.

Therefore, $s$ is even and $s\geq 4$ and we can suppose $m_1 < \cdots < m_i < m_{i+1} < \cdots < m_{s}. $ Then there exists $m\in Z$, $m\neq 0, 1$ such that $f(m)=0$ and $D(m_1, m_2, m)\neq 0$.
Thanks to  Eq. (\mref{eq:rf33}) and  Eq. (\mref{eq:rf34}), $f(m_1+m_2+m-1)=0$. Then $m_1+m_2+m-1\neq m_s$, that is,  $m\neq m_s-m_1-m_2+1$. By the above discussion and $s\geq 4$,
there exists $i\geq 3$ such that $m_s-m_1-m_2+1=1- m_i$. We obtain that $m_1+m_2=m_i+m_s$.   Contradiction.

Summarizing above discussion, we obtain the result.
\end{proof}

\begin{lemma}
Let $R: A_{\omega} \rightarrow A_{\omega}$ be  a homogeneous Rota-Baxter operator  with
$f(0)=f(1)=0$, and satisfy that there exist infinite $m\in Z$ such that $f(m)\neq 0$. Then there exist infinite $n\in Z$ such that $f(n)=0$,
and for all $m\in Z$,   if $f(m)\neq 0$, then $f(1-m)\neq 0$, and $$f(m)+f(1-m)=0.$$
\mlabel{lem:rbd02}
\end{lemma}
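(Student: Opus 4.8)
The plan is to establish the three assertions in turn, using the two specialized Rota--Baxter identities of Lemma~\mref{lem:rbd5} together with the vanishing--product relations of Lemma~\mref{lem:rbd01}.

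For the infinitude of the zero set I would argue by contradiction. If $\{n\in Z : f(n)=0\}$ were finite, then $f(2k)\neq 0$ for all but finitely many $k\in Z$; fixing $m$ with $f(2m)\neq 0$ and then choosing $n\neq m$ outside the finitely many indices for which $f(2n)=0$ or $f(2(m+n))=0$, one gets $f(2m)f(2n)f(2m+2n)\neq 0$, contradicting relation 4) of Lemma~\mref{lem:rbd01}. Hence infinitely many $n$ have $f(n)=0$. This step is routine.

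The core is the relation $f(m)+f(1-m)=0$. The idea is to specialize \eqref{eq:rf33} and \eqref{eq:rf34} so that the value of $f$ at the last index $2l+2m+2n$ (resp. $2l+2m+2n+1$) coincides with one of the three input values; then the cross terms collapse and a single factor survives. Concretely, in \eqref{eq:rf34} I would put $l=-k$ and $n=k$, so that for any $m\neq k$ it reads
\[
f(1-2k)\,f(2m)\,f(2k)=\bigl(f(1-2k)f(2m)+f(1-2k)f(2k)+f(2m)f(2k)\bigr)f(2m);
\]
if moreover $f(2m)\neq 0$, cancelling $f(2m)$ gives $f(2m)\bigl(f(1-2k)+f(2k)\bigr)=0$, hence $f(2k)+f(1-2k)=0$. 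Thus: \emph{if some index $m\neq k$ has $f(2m)\neq 0$, then $f(2k)+f(1-2k)=0$.} Symmetrically, putting $l=k$ and $n=-m$ in \eqref{eq:rf33} shows that whenever $f(2k+1)\neq 0$ one has $f(2m+1)+f(-2m)=0$ for every $m\neq k$.

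To conclude, set $E=\{k\in Z : f(2k)\neq 0\}$ and $O=\{k\in Z : f(2k+1)\neq 0\}$. Since $f$ is nonzero at infinitely many integers, $E$ or $O$ is infinite; assume $|E|\geq 2$ (the case $|O|\geq 2$ is symmetric and, through the substitution $j=1-2k$, yields the same conclusion). Then for every $k\in Z$ one can pick $m\in E$ with $m\neq k$, so the italicised implication gives $f(2k)+f(1-2k)=0$ for all $k$; since every odd integer is of the form $1-2k$, this is exactly $f(m)+f(1-m)=0$ for all $m\in Z$. The second assertion is then immediate: $f(m)\neq 0$ forces $f(1-m)=-f(m)\neq 0$. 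The main obstacle I anticipate is finding the right specializations in the third paragraph so that the last index collapses onto an input index (otherwise one merely introduces a new unknown value of $f$), and then noticing that a single such cancellation only pins down $f(2k)$ once another index of the same parity lies in the support — which is why one needs the support of one parity class to have at least two elements, a fact guaranteed precisely by the infinitude hypothesis.
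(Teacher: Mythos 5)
Your proof is correct. The central computation is the same as the paper's: specialize Eq.~(\mref{eq:rf34}) so that the output index $2l+2m+2n$ collapses onto one of the input indices and then cancel (the paper takes $l=-m$, making the output $2n$; you take $l=-k$, $n=k$, making it $2m$). The logical organization differs, and yours is slightly cleaner: the paper must first establish the implication $f(m)\neq 0\Rightarrow f(1-m)\neq 0$ by a separate contradiction argument (using $D(m,n,1-m)\neq 0$ to force $f(n)=0$ for all $n\neq m,1-m$, against the infinitude of the support), because its version of the cancellation only yields $f(2m)+f(1-2m)=0$ at points where $f(2m)$, $f(1-2m)$ and some $f(2n)$ are all known to be nonzero. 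Your version gives $f(2k)+f(1-2k)=0$ for \emph{every} $k$, assuming only that some other even (resp.\ odd) support point exists, so the support-symmetry assertion becomes a corollary of the identity rather than a prerequisite for it. Your derivation of the infinitude of the zero set directly from relation 4) of Lemma~\mref{lem:rbd01} is also valid (the paper routes this through Corollary~\mref{cor:cor01}), and your parity case split is exhaustive, since an infinite support must meet one of the two parity classes in at least two points.
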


\begin{proof} If there exists $m\in Z$ such that $f(m)\neq 0$, but $f(1-m)=0$. Then for all $n\in Z$ and  $n\neq m, 1-m$,  by   Eq. (\mref{eq:rf33}), Eq. (\mref{eq:rf34}) and $D(m, n, 1-m)\neq 0$, we have $f(m+n+1-m-1)=f(n)=0$. Contradiction. Therefore, if $f(m)\neq 0$, then $f(1-m)\neq 0$. Thanks to the result 8) in  Corollary \mref{cor:cor01}, there exist
infinite $n\in Z$ such that $f(n)=0$.

Now for distinct $2m, 2n\in Z$, $f(2m)\neq 0$ and $f(2n)\neq 0$ and $m\neq n$,  by  Eq. (\mref{eq:rf34}),

  \vspace{2mm}$f(1-2m)f(2m)f(2n)$

 \vspace{2mm} $=((f(1-2m)f(2m)+f(1-2m)f(2n)+f(2m)f(2n))f(1-2m+2n+2m)$.

It follows  $f(1-2m)+f(2m)=0$ for all $m\in Z$. The proof is completed.

\end{proof}

\begin{theorem}   Let $R: A_{\omega} \rightarrow A_{\omega}$ be  a homogeneous Rota-Baxter operator with
$f(0)=f(1)=0$, and there exist infinite $m\in Z$ such that $f(m)\neq 0$. Then there exist positive integer
$m_0$ and $s_0$ satisfying $1\leq s_0 < m_0$ such that  $f(m)\neq 0$ if and only if $m\in W=\{2m_{0}k+2s_{0}| ~~ k\in Z\}\cup \{1-2m_{0}k-2s_{0}| ~~ k\in Z\}$.
\mlabel{thm:thm02}
\end{theorem}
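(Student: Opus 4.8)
The plan is to start from the conclusion of Lemma \mref{lem:rbd02}, which tells us that the support set $S=\{m\in Z : f(m)\neq 0\}$ is infinite, has infinite complement, and is stable under the reflection $m\mapsto 1-m$. Since $f(0)=f(1)=0$, neither $0$ nor $1$ lies in $S$, so $S$ consists of ``even'' type indices $2k$ and ``odd'' type indices $2k+1$; the reflection $m\mapsto 1-m$ swaps the two types, so the even part $S_e=\{k : 2k\in S\}$ and the odd part are in bijection, and it suffices to determine $S_e$. First I would show, using the constancy relations from Eq.~(\mref{eq:rf34}) applied to triples $2l+1,2m,2n$ with $f$ nonzero on the relevant entries — exactly the mechanism behind Corollary \mref{cor:cor01} parts 5), 6), 7) and 8) — that $S_e$ is closed under the operations $k_2+k_3-k_1$ (when these come from admissible triples) and is ``quasi-periodic''. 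The key combinatorial fact to extract is that $S_e$ is a union of arithmetic progressions with a common difference.

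The core step is to run an argument parallel to the proof of Theorem \mref{thm:thm3}: let $m_0$ be the smallest positive integer that is a ``period'' of $S_e$ in the sense that $S_e+m_0\subseteq S_e$ is forced by the relations, and let $2s_0$ (with $0< s_0$) be determined by the smallest positive element of $S_e$. Concretely, order the even elements of $S$ as $\cdots < 2x_{-1} < 2x_0 < 2x_1 < \cdots$; using Corollary \mref{cor:cor01} 6) (closure under $x_{k+1}-x_k+x_{k-1}$ staying in $S_e$) one shows inductively, just as in Theorem \mref{thm:thm3}, that consecutive gaps $x_{k+1}-x_k$ are all equal to a fixed value $m_0$, so $S_e=\{m_0 k + s_0 : k\in Z\}$ for the residue $s_0$ of any element. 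Then the reflection symmetry $f(m)+f(1-m)=0$ from Lemma \mref{lem:rbd02} forces the odd part of $S$ to be $\{1-2m_0k-2s_0 : k\in Z\}$, giving the claimed $W$. Finally, $s_0$ may be chosen in the range $1\le s_0 < m_0$: it cannot be $0$, since $0\notin S$, and reducing mod $m_0$ puts it in that interval; the condition $s_0<m_0$ together with $s_0\ge 1$ also guarantees that $W$ genuinely contains both types and that the even and odd sub-progressions are disjoint (otherwise one would get $0\in S$ or $1\in S$).

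The step I expect to be the main obstacle is establishing that the gaps between consecutive even support points are genuinely constant — i.e.\ transplanting the inductive argument of Theorem \mref{thm:thm3} to the $f(0)=f(1)=0$ setting. In Theorem \mref{thm:thm3} one crucially used relations with $f(0)=1$ (Corollary \mref{cor:cor1}); here those are unavailable, so one must instead feed triples $(2l+1,2m,2n)$ and $(2l+1,2m+1,2n)$ into Lemma \mref{lem:rbd5} with all three of the ``outer'' values in $S$, and carefully track which resulting index is forced into or out of $S$ — this is the content of Corollary \mref{cor:cor01} 5)--8), but one needs the sharper statement that $x_{k+1}-x_k$ cannot decrease and cannot increase, which requires choosing the auxiliary odd index $2m+1\in S$ (available since $S$ meets the odd integers, by the reflection symmetry) and checking the admissibility conditions $l\neq m$, $m\neq n$ of Lemma \mref{lem:rbd5} hold. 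Once the constant-gap claim is in hand, the identification of $W$ and the normalization of $s_0$ are routine.
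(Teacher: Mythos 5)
Your proposal follows essentially the same route as the paper's proof: invoke Lemma \mref{lem:rbd02} for the reflection symmetry $f(m)+f(1-m)=0$, order the even support points $2x_k$, and apply part 6) of Corollary \mref{cor:cor01} in the same inductive squeeze ($x_{k-1}<x_{k+1}-m_0<x_{k+1}$ forces $x_{k+1}-m_0=x_k$) to get constant gaps $m_0$, then normalize the offset to $1\le s_0<m_0$. The only (harmless) variation is that you rule out $m_0\mid x_0$ directly from $f(0)=0$, whereas the paper deduces it from part 1) of Corollary \mref{cor:cor01} applied to $2x_1+2x_{-1}=4x_0$.
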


\begin{proof}  By Lemma \mref{lem:rbd02}, we can suppose that
$W=\{2x_k | k\in Z\}\cup \{1-2x_k | k\in Z\}$ is set of integers satisfying that $f(m)\neq 0$ if and only if $m\in W$. By Lemma \mref{lem:rbd02} and Corollary \mref{cor:cor1}, we can suppose that
for $2x_k, 2x_s\in W$,  $2x_k < 2x_s$ if and only if $k <s$ and  $2x_{-1}<0$, $2x_0 >0$.

Denote $x_{1}-x_{0}=m_{0}, x_{2}-x_{1}=m_{1}$. Then  $m_{0}>0, m_{1}>0$. From  $2x_{0}\in W,$ $ -2x_{1}+1\in W,$ $ 2x_{2}\in W$ and the result  6)  in  Corollary \mref{cor:cor01}, we have
$$2(x_{2}-x_{1}+x_{0})=2(m_{0}+x_{0}-x_{0}+x_{0}-m_{1})=2(x_{2}-m_{0})\in W.$$

Since $x_0=x_{1}- m_{0}<x_{2}-m_{0} <x_{2}$,   $x_{2}-m_{0}=x_{1}$, that is, $m_{1}=m_{0}$.

Now suppose $x_{k}-x_{k-1}=m_{0}$ for $k>0$. Denote  $x_{k+1}-x_{k}=m_{k}$. According to the result 6) of Corollary \mref{cor:cor01},  we have $$2(x_{k+1}-x_{k}+x_{k-1})=2(m_{k}+x_{k}-x_{k}+x_{k}-m_{0})=2(x_{k+1}-m_{0})\in W.$$
 Thanks to $x_{k-1}=x_k-m_{0}<x_{k+1}-m_0 <x_{k+1}$, $x_{k+1}-m_0=x_k$, that is, $m_{k}=m_{0}$.
  Therefore, $2x_k=2km_0+2x_0, ~ k > 0,   k\in Z.$

 Similar discussion we have $2x_k=2km_0+2x_0, $ for all $ k <0, k\in Z.$

Therefore, $W=\{ 2km_0+2x_0 | ~ k\in Z, x_0 > 0\},$  where ~ $m_0 > 0.$

By Lemma \mref{lem:rbd02} and  the result  1) in Corollary \mref{cor:cor01}, $2x_1+2x_{-1}=2x_0+2x_0\not\in W$, that is, $m_0$ is not a factor of $x_0.$
So there exist  integers $s_0$ and $q$ such that $1\leq s_0 < m_0$ and $x_0=qm_0+s_0$.

Therefore,   $2x_k=2(k+q)m_0+2s_0$, for all $k\in Z$.
It follows the result.
\end{proof}

For positive integer $m_0$ and $s_0$ with  $1\leq s_0 < m_0$, denote
$$W_{m_0, s_0}=\{2m_{0}k+2s_{0}~ |~ k\in Z\}\cup \{1-2m_{0}k-2s_{0}~ |~ k\in Z\}.$$
 If $f$ satisfies that $f(m)\neq 0$ if and only if $m\in W_{m_0, s_0}$, then $W_{m_0, s_0}$ is called an {\bf $(m_{0}, s_{0})-$ supporter of $R$.}
  By Lemma \mref{lem:rbd2}, {\bf we can suppose that  $f(2s_0)=1.$ }

\begin{lemma} Let $R: A_{\omega} \rightarrow A_{\omega}$ be  a Homogeneous Rota-Baxter operator with $(m_{0}, s_{0})-$ supporter $W_{m_0, s_0}$, and
$f(0)=f(1)=0$. Then for all $k_i\in Z$, and $k_i\neq k_j$,  for $1\leq i\neq j\leq 3$, we have
 \begin{equation}
\frac{1}{f(2m_{0}k_{1}+2s_{0})}+\frac{1}{f(2m_{0}k_{2}+2s_{0})}+\frac{1}{f(2m_{0}k_{3}+2s_{0})}
=\frac{1}{f(2m_{0}(k_{1}+k_{2}-k_{3})+2s_{0})}
\mlabel{eq:rf01}
\end{equation}
$$
+\frac{1}{f(2m_{0}(k_{1}-k_{2}+k_{3})+2s_{0})}+\frac{1}{f(2m_{0}(-k_{1}+k_{2}+k_{3})+2s_{0})}.
$$

Therefore,
 \begin{equation}
\frac{1}{f(2m_{0}k+2s_{0})}+\frac{1}{f(-2m_{0}k+2s_{0})}=2,
\mlabel{eq:rf02}
\end{equation}
and $f(2m_{0}k+2s_{0})\neq \frac{1}{2}$~  for all ~ $ k\in Z$.
\mlabel{lem:rbd03}
\end{lemma}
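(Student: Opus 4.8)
The plan is to extract Eq.~(\mref{eq:rf01}) directly from the defining identity Eq.~(\mref{eq:rf34}) in Lemma~\mref{lem:rbd5} (equivalently, from the relation used in the proof of Lemma~\mref{lem:rbd02}), specialized to arguments lying in the supporter $W_{m_0,s_0}$, and then to obtain Eqs.~(\mref{eq:rf02}) and the inequality $f(2m_0k+2s_0)\neq\frac12$ as special cases and an immediate consequence. First I would recall that, by Lemma~\mref{lem:rbd02}, for a homogeneous Rota-Baxter operator with $f(0)=f(1)=0$ we have $f(m)+f(1-m)=0$ for every $m$, so in particular $f(1-2m)=-f(2m)$. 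Substituting this into Eq.~(\mref{eq:rf34}) (as is done in the proof of Lemma~\mref{lem:rbd02}, but now without fixing $l=m$) gives, for all $l,m,n\in Z$ with $m\neq n$,
\begin{equation}
-f(2l)f(2m)f(2n)=\bigl(-f(2l)f(2m)-f(2l)f(2n)+f(2m)f(2n)\bigr)f(-2l+2m+2n).
\mlabel{eq:rf-plan}
\end{equation}

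Next I would specialize the three indices to elements of the support: set $2l=2m_0k_1+2s_0$, $2m=2m_0k_2+2s_0$, $2n=2m_0k_3+2s_0$, where $k_1,k_2,k_3$ are pairwise distinct integers; note that $-2l+2m+2n = 2m_0(-k_1+k_2+k_3)+2s_0$, which again lies in $W_{m_0,s_0}$. Since all four values $f(2l),f(2m),f(2n),f(-2l+2m+2n)$ are then nonzero, I can divide Eq.~(\mref{eq:rf-plan}) through by the product $f(2l)f(2m)f(2n)f(-2l+2m+2n)$, which turns it into the additive-in-reciprocals identity
$$\frac{1}{f(2m_0k_1+2s_0)}=-\frac{1}{f(2m_0k_3+2s_0)}-\frac{1}{f(2m_0k_2+2s_0)}+\frac{1}{f(2m_0(-k_1+k_2+k_3)+2s_0)}.$$
Rearranging and symmetrizing (the right-hand side of (\mref{eq:rf01}) is just the three cyclic images of the index pattern $-k_1+k_2+k_3$, so one applies the computation three times or simply reindexes) yields exactly Eq.~(\mref{eq:rf01}); one should check that the required genericity hypotheses $m\neq n$ hold, i.e.\ $k_2\neq k_3$, which is covered by the pairwise-distinctness assumption. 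To get Eq.~(\mref{eq:rf02}) I would put $k_1=0$, $k_2=k$, $k_3=-k$ in (\mref{eq:rf01}) (valid when $k\neq 0$; the case $k=0$ is trivial), so that two of the three terms on each side collapse using $f(2s_0)=1$ and the identity reduces to $\frac{1}{f(2m_0k+2s_0)}+\frac{1}{f(-2m_0k+2s_0)}=2$. Finally, $f(2m_0k+2s_0)=\frac12$ would force $\frac{1}{f(-2m_0k+2s_0)}=0$, which is impossible for a scalar, giving the claimed inequality.

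The main obstacle I anticipate is purely bookkeeping: one must verify that the substituted indices satisfy the side conditions of Lemma~\mref{lem:rbd5} (the distinctness conditions $m\neq l$, $m\neq n$ that guarantee the determinants $D(\cdot,\cdot,\cdot)$ are nonzero) and that all four arguments of $f$ genuinely lie in $W_{m_0,s_0}$, so that none of the divisions are by zero; also the symmetrization step requires care, since Eq.~(\mref{eq:rf01}) as stated is not literally one instance of (\mref{eq:rf-plan}) but the sum/combination of the instances obtained by permuting which index plays the role of $k_1$. Beyond this, once $f(m)+f(1-m)=0$ from Lemma~\mref{lem:rbd02} is in hand, the argument is a routine division and reindexing, with no genuinely hard step; the inequality $f\neq\frac12$ is an immediate corollary of (\mref{eq:rf02}).
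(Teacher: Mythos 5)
Your strategy for Eq.~(\mref{eq:rf01}) is essentially the paper's: combine Eq.~(\mref{eq:rf34}) with the antisymmetry $f(1-m)=-f(m)$ from Lemma~\mref{lem:rbd02} to obtain
$-f(2l)f(2m)f(2n)=(-f(2l)f(2m)-f(2l)f(2n)+f(2m)f(2n))f(-2l+2m+2n)$, restrict all four arguments to the supporter, divide by the nonzero product, and add the three instances obtained by permuting which index carries the minus sign. But your displayed rearrangement has the wrong signs. Writing $g(k)=1/f(2m_0k+2s_0)$, dividing the identity through gives
$$g(k_2)+g(k_3)-g(k_1)=g(-k_1+k_2+k_3),$$
whereas you wrote $g(k_1)=-g(k_2)-g(k_3)+g(-k_1+k_2+k_3)$, which is the negative of the correct relation in three of the four terms. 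This is not cosmetic: summing the three cyclic images of your version yields $3\bigl(g(k_1)+g(k_2)+g(k_3)\bigr)=g(k_1+k_2-k_3)+g(k_1-k_2+k_3)+g(-k_1+k_2+k_3)$, not Eq.~(\mref{eq:rf01}); summing the corrected version gives Eq.~(\mref{eq:rf01}) exactly (with the side conditions $k_2\neq k_3$, $k_1\neq k_3$, $k_1\neq k_2$ for the three instances, as you note).

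The second and more substantive problem is your derivation of Eq.~(\mref{eq:rf02}). Specializing Eq.~(\mref{eq:rf01}) at $k_1=0$, $k_2=k$, $k_3=-k$ does \emph{not} collapse to Eq.~(\mref{eq:rf02}): only one term on each side equals $g(0)=1$, and after cancelling them you are left with $g(k)+g(-k)=g(2k)+g(-2k)$, which is strictly weaker than $g(k)+g(-k)=2$. The paper instead reads Eq.~(\mref{eq:rf02}) off the \emph{unsymmetrized} identity above, before summing: taking $k_1=0$, $k_2=k$, $k_3=-k$ there (allowed for $k\neq 0$ since the relevant condition is $k_2\neq k_3$) gives $g(k)+g(-k)-g(0)=g(0)$, i.e.\ Eq.~(\mref{eq:rf02}) after the normalization $f(2s_0)=1$; the case $k=0$ is trivial. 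Your final step, that $f(2m_0k+2s_0)=\frac{1}{2}$ would force $1/f(-2m_0k+2s_0)=0$, is fine. Both defects are repairable from material you already have on the page, but as written the proof does not go through.
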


\noindent{\bf Proof} By  Lemma \mref{lem:rbd5} and Lemma \mref{lem:rbd02}, for all $k_1, k_2, k_3\in Z$ and $ k_{1}\neq k_{2}$, we have

\vspace{2mm}$
f(2m_{0}k_{1}+2s_{0})f(2m_{0}k_{2}+2s_{0})f(2m_{0}k_{3}+2s_{0})$

\vspace{2mm}$=(-f(2m_{0}k_{1}+2s_{0})f(2m_{0}k_{2}+2s_{0})+f(2m_{0}k_{1}
+2s_{0})f(2m_{0}k_{3}+2s_{0})$

\vspace{2mm}$+f(2m_{0}k_{2}+2s_{0})f(2m_{0}k_{3}+2s_{0}))f(2m_{0}(k_{1}+k_{2}-k_{3})+2s_{0})\neq 0.
$

Therefore,
$$ \frac{1}{f(2k_1m_0+2s_0)}+\frac{1}{f(2k_2m_0+2s_0)}-\frac{1}{f(2k_3m_0+2s_0)}=\frac{1}{f(2(k_1+k_2-k_3)m_0+2s_0)}.$$
For the case $k_1=-k_2$ and $k_3=0$, we obtain Eq.(\mref{eq:rf02}).

Similarly, for $k_1\neq k_3$, we have
$$ \frac{1}{f(2k_1m_0+2s_0)}+\frac{1}{f(2k_3m_0+2s_0)}-\frac{1}{f(2k_2m_0+2s_0)}=\frac{1}{f(2(k_1+k_3-k_2)m_0+2s_0)},$$
and for $k_2\neq k_3$, we have
$$ \frac{1}{f(2k_2m_0+2s_0)}+\frac{1}{f(2k_3m_0+2s_0)}-\frac{1}{f(2k_1m_0+2s_0)}=\frac{1}{f(2(k_2+k_3-k_1)m_0+2s_0)},$$
 It follows Eq. (\mref{eq:rf01}).
The result follows.

\begin{theorem}
Let $R: A_{\omega}\rightarrow A_{\omega}$ be a linear map defined as Eq. (\mref{eq:rf31}) which satisfies that there exist infinite $m\in Z$ such that $f(m)=f(0)=f(1)=0$. Then $R$ is a homogeneous Rota-Baxter operator on $A_{\omega}$
if and only if there exist positive integer   $m_0$ and $s_0$, and $a\in F$,  such that $W_{m_0, s_0}$ is an $(m_0, s_0)$-supporter of $R$, and

\begin{equation}
f(2m_0k+2s_0)=-f(1-2m_0k-2s_0)=\frac{1}{ka-(k-1)}, ~\forall ~  k\in Z,
\mlabel{eq:rf03}
\end{equation}
 where $1\leq s_0 < m_0$ and $a\neq \frac{k-1}{k}$, for all $k\in Z$ and $k\neq 0$.
\mlabel{thm:thm03}
\end{theorem}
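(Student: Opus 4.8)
The plan is to mirror the proof of Theorem~\ref{thm:thm4}, with $W_{m_0,s_0}$ playing the role of $W_{m_0}$ and Lemma~\ref{lem:rbd03} playing the role of Lemma~\ref{lem:rbd8}.

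\emph{Necessity.} Assume $R$ is a homogeneous Rota-Baxter operator with $f(0)=f(1)=0$ and infinitely many vanishing values. By Theorem~\ref{thm:thm02} there are positive integers $m_0,s_0$ with $1\le s_0<m_0$ such that $W_{m_0,s_0}$ is the $(m_0,s_0)$-supporter of $R$, and by Lemma~\ref{lem:rbd2} we may normalize $f(2s_0)=1$. Set $g(k):=1/f(2m_0k+2s_0)$; this is well defined and nonzero since $2m_0k+2s_0\in W_{m_0,s_0}$, and $g(0)=1$. From the three-term identity established inside the proof of Lemma~\ref{lem:rbd03}, namely $g(k_1)+g(k_2)-g(k_3)=g(k_1+k_2-k_3)$ whenever $k_1\ne k_2$, together with Eq.~(\ref{eq:rf02}) (which says $g(k)+g(-k)=2$), I would show that $h(k):=g(k)-1$ is odd and satisfies the restricted Cauchy relation $h(k_1)+h(k_2)=h(k_1+k_2)$ for $k_1\ne k_2$; an elementary induction (the excluded diagonal $k_1=k_2$ is bridged with the oddness relation and the value $h(0)=0$) then forces $h(k)=k\,h(1)$ for all $k\in Z$. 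Putting $a:=g(1)=1/f(2m_0+2s_0)$ yields $f(2m_0k+2s_0)=1/(ka-(k-1))$, and $f(1-2m_0k-2s_0)=-f(2m_0k+2s_0)$ by Lemma~\ref{lem:rbd02}; the restriction $a\ne(k-1)/k$ for $k\ne0$ is exactly the statement that these denominators do not vanish.

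\emph{Sufficiency.} Conversely, suppose $f$ is given by Eq.~(\ref{eq:rf03}) on $W_{m_0,s_0}$ and is zero elsewhere; note first that $f$ is then non-vanishing precisely on $W_{m_0,s_0}$ because $ka-(k-1)\ne0$ for all $k\in Z$. By Theorem~\ref{thm: thm0} it suffices to check Eq.~(\ref{eq:rf32}) for all $l,m,n\in Z$. If $D(l,m,n)=0$ both sides vanish, so by Lemma~\ref{lem:rbd4} we may assume $l,m,n$ are pairwise distinct and not all of one parity; dividing by $D(l,m,n)\ne0$ and using the resulting symmetry in $l,m,n$, we may assume that either two of them are even and one is odd, or two are odd and one is even. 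Since the even elements of $W_{m_0,s_0}$ are exactly the even integers $\equiv 2s_0\pmod{2m_0}$ and the odd ones are exactly the odd integers $\equiv 1-2s_0\pmod{2m_0}$, a short residue computation shows in either configuration that $l+m+n-1$ lies in $W_{m_0,s_0}$ if and only if all three of $l,m,n$ do. Hence whenever one of $f(l),f(m),f(n)$ is zero we get $f(l+m+n-1)=0$ as well, so both sides of Eq.~(\ref{eq:rf32}) vanish; and when all three are nonzero, writing the denominators of $f(l),f(m),f(n)$ as $A_1,A_2,A_3$, one checks that the denominator attached to $f(l+m+n-1)$ equals $A_1+A_2-A_3$, after which Eq.~(\ref{eq:rf32}) collapses to an identity of rational expressions.

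The step I expect to be the most delicate is the bookkeeping in the sufficiency part: tracking the parity of $l+m+n-1$ and the residues modulo $2m_0$ in both mixed-parity configurations, so as to conclude $f(l+m+n-1)=0$ exactly when one left-hand factor vanishes, and then organizing the otherwise routine cancellation in the all-nonzero case — essentially the same work as in the ``Conversely'' part of the proof of Theorem~\ref{thm:thm4}, but now carrying the extra shift $2s_0$ throughout.
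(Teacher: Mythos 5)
Your proposal is correct and matches the paper, whose entire proof of this theorem is the single line ``The proof is completely similar to Theorem \ref{thm:thm4}''; what you have written is exactly that transported argument (Theorem \ref{thm:thm02} and Lemma \ref{lem:rbd03} in place of Theorem \ref{thm:thm3} and Lemma \ref{lem:rbd8} for necessity, and the same case analysis for sufficiency with the shift $2s_0$ tracked modulo $2m_0$). The only quibble is that your claim ``$l+m+n-1\in W_{m_0,s_0}$ iff all three of $l,m,n$ are'' can fail when at most one of $l,m,n$ lies in $W_{m_0,s_0}$, but in that case at least two of $f(l),f(m),f(n)$ vanish and both sides of Eq.~(\ref{eq:rf32}) are zero anyway, so the argument is unaffected.
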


\begin{proof} The  proof is completely similar to Theorem \mref{thm:thm4}.

\end{proof}

   Let $m_0=7$, $a=2$, and $s_0=2$. By Theorem \mref{thm:thm03}, the linear map $R: A_{\omega}\rightarrow A_{\omega}$ defined by for all $k\in Z$,
$$R(L_{14k+4})=\frac{1}{2k-(k-1)}L_{14k+4}=\frac{1}{k+1}L_{14k+4}, ~~ \quad R(L_{-14k-3})=-\frac{1}{k+1}L_{-14k-3}, $$
and others are zero, is a Homogeneous Rota-Baxter operator of weight $0$ with $(7, 2)$-supporter $$W_{7, 2}=\{ 14k+4 |~ k\in Z\}\cup \{ -14k-3 |~ k\in Z\}.$$

If $m_0=4,$ $s_0=3$ and $a=\frac{3}{5}$, then the linear map $R: A_{\omega}\rightarrow A_{\omega}$ defined by for all $k\in Z$,
$$R(L_{8k+6})=\frac{5}{5-2k}L_{8k+6}, ~~ \quad R(L_{-8k-5})=-\frac{5}{5-2k}L_{-8k-5}, $$
and others are zero, is a homogeneous Rota-Baxter operator of weight $0$ with the $(4, 3)$-supporter $$W_{4, 3}=\{ 8k+6 |~ k\in Z\}\cup \{-8k-5 |~ k\in Z\}.$$

\subsection{$3$-Lie algebras constructed by $A_{\omega}$ and homogeneous Rota-Baxter operators}
\mlabel{ss:3lialg}

In the  study of $3$-Lie algebras, we know that construction of $3$-Lie algebras from known algebras  is always interesting.
So in this section, we construct $3$-Lie algebras from the $3$-Lie algebra $A_{\omega}$ and the homogeneous Rota-Baxter operators.

Let $(A, [,,]) $ be a $3$-Lie algebra and $R$ be a Rota-Baxter with  weight $\lambda$. Using the notation in  Eq. (\mref{eq:rbn}), we define a ternary operation $[ , , ]_R$ on $A$ by
\begin{equation}
{[}x_{1}, x_{2}, x_{3}]_{R}= \sum_{\emptyset \neq I \subseteq [3]} \lambda^{|I|-1}[\widehat{R}_{I}(R(x_{1})), \widehat{R}_{I}(R(x_{2})),\widehat{R}_{I}(R(x_{3}))],~~ \forall x, y, z \in A.
\mlabel{eq:rf04}
\end{equation}

Therefore, in the case $\lambda=0$, we have

\begin{equation}
{[}x_{1}, x_{2}, x_{3}]_{R} =[R(x), R(y), z]+[R(x), y, R(z)]+[x, R(y), R(z)], ~~ \forall x, y, z \in A.
\mlabel{eq:rf05}
\end{equation}

\begin {theorem}\cite{BGL11} Let $(A, [,,]) $ be a $3$-Lie algebra and $R$ be a Rota-Baxter of  weight $\lambda$. Then $(A, [ , , ]_R)$ is $3$-Lie algebra in the multiplication
defined as Eq.  (\mref{eq:rf04}), and $R$ is also a Rota-Baxter operator of it.
\mlabel{thm:thm04}
\end{theorem}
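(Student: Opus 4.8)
The plan is to check three things in turn: that the ternary operation $[\,,\,,\,]_R$ of Eq.~(\mref{eq:rf04}) is multilinear and skew-symmetric, that it satisfies the $3$-Jacobi identity~(\mref{eq:nlie}), and that the map $R$ satisfies the Rota-Baxter relation~(\mref{eq:rbn}) for this new operation. Multilinearity and skew-symmetry are immediate: each summand in~(\mref{eq:rf04}) indexed by $\emptyset\neq I\subseteq[3]$ is multilinear in $x_1,x_2,x_3$, and interchanging two arguments $x_i,x_j$ merely permutes the family of index sets $I$ (by relabeling positions) while producing the overall sign $-1$ coming from the skew-symmetry of $[\,,\,,\,]$. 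The point on which everything else turns is that the Rota-Baxter relation~(\mref{eq:rbn}) for $R$ can be rewritten, for weight $\lambda=0$ and using the definition~(\mref{eq:rf05}), as
$$[R(x_1),R(x_2),R(x_3)]=R\big([x_1,x_2,x_3]_R\big),\qquad\forall\,x_1,x_2,x_3\in A,\qquad(\star)$$
and analogously for general $\lambda$. I would establish $(\star)$ first and then invoke it repeatedly. (As a sanity check: when $R$ is invertible, $(\star)$ shows that transporting $[\,,\,,\,]_R$ along $R$ returns $[\,,\,,\,]$, so $(A,[\,,\,,\,]_R)$ is then a $3$-Lie algebra for free; but the general, possibly degenerate, case needs the direct argument below.)

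Granting $(\star)$, the statement that $R$ is a Rota-Baxter operator on $(A,[\,,\,,\,]_R)$ is a short computation. Taking $\lambda=0$ for concreteness, expand $[R(x_1),R(x_2),R(x_3)]_R$ by~(\mref{eq:rf05}) into $[R^2(x_1),R^2(x_2),R(x_3)]+[R^2(x_1),R(x_2),R^2(x_3)]+[R(x_1),R^2(x_2),R^2(x_3)]$; now apply $(\star)$ to each summand, reading for instance $[R^2(x_1),R^2(x_2),R(x_3)]=[R(R(x_1)),R(R(x_2)),R(x_3)]=R\big([R(x_1),R(x_2),x_3]_R\big)$, to obtain $R\big([R(x_1),R(x_2),x_3]_R+[R(x_1),x_2,R(x_3)]_R+[x_1,R(x_2),R(x_3)]_R\big)$, which is precisely Eq.~(\mref{eq:rbn}) for $R$ on $(A,[\,,\,,\,]_R)$. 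For general weight $\lambda$ the same manipulation is carried out on the full sum over subsets $I$.

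The remaining and most laborious point is the $3$-Jacobi identity~(\mref{eq:nlie}) for $[\,,\,,\,]_R$. The plan is to expand the left side $[[x_1,x_2,x_3]_R,y_2,y_3]_R$ and the right side $\sum_{i=1}^{3}[x_1,\dots,[x_i,y_2,y_3]_R,\dots,x_3]_R$ by the definition~(\mref{eq:rf05}), and then use $(\star)$ to replace every occurrence of $R\big([u_1,u_2,u_3]_R\big)$ by $[R(u_1),R(u_2),R(u_3)]$. After this step every term on each side has the form $\big[[\,a_1,a_2,a_3\,],a_4,a_5\big]$ with each $a_j$ equal to one of $x_i$, $y_j$, $R(x_i)$, $R(y_j)$. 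One then applies the $3$-Jacobi identity of the original bracket $[\,,\,,\,]$ to each such term (expanding the inner bracket) and checks that the resulting collections of elementary terms on the two sides coincide. The bookkeeping is the obstacle: it is best organized by recording, for each elementary term, the pattern of which of the five slots carries an $R$, since keeping track of this data makes the term-by-term matching of the two sides transparent. The general weight $\lambda$ case follows the identical scheme applied to the weighted sums; this is the content of the cited result in~\cite{BGL11}.
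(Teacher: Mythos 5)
Your proposal is correct, and it is worth noting at the outset that the paper itself supplies no proof of this statement: Theorem \mref{thm:thm04} is imported from \cite{BGL11} with a citation only, so the comparison here is with the standard argument of that reference, which your outline reproduces. The pivot of your proof, the identity $(\star)$: $[R(x_1),R(x_2),R(x_3)]=R([x_1,x_2,x_3]_R)$, is exactly right --- it is nothing but the Rota--Baxter relation (\mref{eq:rbn}) reread through the definition of the new bracket --- and it does carry both halves of the theorem. (One small caution: Eq.~(\mref{eq:rf04}) as printed writes $\widehat{R}_I(R(x_i))$, which taken literally would not be the argument of $R$ in (\mref{eq:rbn}); your reading, which matches (\mref{eq:rf05}) and makes $(\star)$ true, is the intended one.) Your derivation that $R$ is again Rota--Baxter on $(A,[\,,\,,\,]_R)$ is complete as written, and it generalizes to arbitrary $\lambda$ exactly as you claim, since each summand $[\hat R_J(Rx_1),\hat R_J(Rx_2),\hat R_J(Rx_3)]$ has every argument in the image of $R$ and so equals $R([\hat R_J(x_1),\hat R_J(x_2),\hat R_J(x_3)]_R)$ by $(\star)$. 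For the $3$-Jacobi identity your scheme does close in the weight-zero case: after using $(\star)$ to replace $R([u_1,u_2,u_3]_R)$ by $[Ru_1,Ru_2,Ru_3]$, each side becomes a sum of five terms of the form $[[a_1,a_2,a_3],a_4,a_5]$, applying the Jacobi identity of the original bracket to the five left-hand terms produces fifteen elementary terms, and these match bijectively (up to cyclic rearrangement of the outer bracket) with the fifteen terms obtained by expanding the three right-hand summands. The only soft spot is that for general weight $\lambda$ the Jacobi verification is asserted rather than performed --- the sum over subsets $I$ produces substantially more terms and the weighted Rota--Baxter identity must be invoked in full --- but since the present paper only ever uses the $\lambda=0$ case and defers the general statement to \cite{BGL11}, your level of detail matches what is actually needed.
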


So if $R$ is a homogeneous Rota-Baxter operator of the $3$-Lie algebra $A_{\omega}$ of  weight $0$,  then $(A, [ , , ]_R)$ is  a  $3$-Lie algebra in
the multiplication defined as Eq.  (\mref{eq:rf05}), where $A=A_{\omega}$ as vector spaces, and $R$ is also   a homogeneous Rota-Baxter operator of $(A, [ , , ]_R)$.

\begin{theorem} Let $R: A_{\omega}\rightarrow A_{\omega}$ be a linear map defined as Eq. (\mref{eq:rf31}), then $R$ is a
homogeneous Rota-Baxter operator of weight $0$ on $3$-Lie algebra $A_{\omega}$ if and only if $R$ is the one of the following

\vspace{2mm}$R_{0_1}(L_0)=L_0$, $R_{0_1}(L_1)=bL_1$,   and $R_{0_1}(L_m)=0$,
 for all $m\in Z$, $m\neq 0, 1.$

\vspace{2mm}$R_{0_{2}}(L_{m})=\left\{
  \begin{array}{ll}
  L_{0}, \quad m=0, &\\
  -L_{1}, \quad m=1, &\\
  \frac{1}{ka-(k-1)}L_{2m_{0}k}, m=2m_{0}k\in W_{m_{0}}, &\\
  -\frac{1}{ka-(k-1)}L_{1-2m_{0}k}, m=1-2m_{0}k\in W_{m_{0}},&\\
  0, \quad \text{others}.&
  \end{array}
\right.$

\vspace{2mm}$R_{0_{3}}(L_{m})=\left\{
  \begin{array}{ll}
   \frac{1}{ka-(k-1)}L_{2m_{0}k+2s_{0}}, \quad  m=2m_{0}k+2s_{0}\in W_{m'_{0}, s_{0}},&\\
  -\frac{1}{ka-(k-1)}L_{1-2m_{0}k-2s_{0}},  m=1-2m_{0}k-2s_{0}\in W_{m'_{0}, s_{0}}, &\\
  0, \quad \quad \quad \quad\quad\quad \quad \forall m\notin W_{m'_{0}, s_{0}}.&
    \end{array}
\right.$~

\vspace{2mm}$R_{0_{4}}(L_{m})$=$\left\{
  \begin{array}{ll}
  L_{m_{1}}, ~~m=m_{1},&\\
  0,~~\quad  m\neq m_{1}.&
  \end{array}
\right.$~~~~

\vspace{2mm} $R_{0_{5}}(L_{m})$=$\left\{
  \begin{array}{ll}
  L_{m_{1}}, ~~\quad m=m_{1}, &\\
  bL_{1-m_{1}}, ~~m=1-m_{1}, &\\
  0,~~ \quad \quad m\neq m_{1}, 1-m_{1}.&
  \end{array}
\right.$~~~~
\\where $m_1, m_0, m'_0, s_0\in Z$, $m_1\neq 0, 1$; $m_0>0$;  $1\leq s_0 <m'_0$; $a, b, c\in F$, $a\neq \frac{k-1}{k}$, $b\neq 0$,
$W_{m_0}=\{2m_{0}k~ |~ k\in Z\}\cup \{1-2m_{0}k~ |~ k\in Z\}$, $W_{m'_0, s_0}=\{2m'_{0}k+2s_{0}~ |~ k\in Z\}\cup \{1-2m'_{0}k-2s_{0}~ |~ k\in Z\}.$

 \mlabel{thm:thm04}
\end{theorem}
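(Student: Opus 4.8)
The plan is to assemble the final theorem purely as a bookkeeping corollary of the structural results already proved for each value of the pair $(f(0),f(1))$, since Theorem \mref{thm: thm0} reduces the Rota-Baxter condition to the single scalar identity Eq. (\mref{eq:rf32}), and the subsequent subsections exhaust every possibility for $f$. First I would observe that a homogeneous operator $R$ is determined by $f:Z\to F$, and that $f(0)+f(1)$ is either nonzero or zero. In the first case Theorem \mref{thm:thm1} says exactly that $f(m)=0$ for all $m\neq 0,1$, with $f(0),f(1)$ arbitrary subject to $f(0)+f(1)\neq 0$; after noting that Lemma \mref{lem:rbd2} lets one rescale (and that a further rescaling argument, or direct inspection of Eq. (\mref{eq:rf32}) with $l=0,n=1$, forces no constraint beyond $f(0)+f(1)\neq 0$), this is precisely the family $R_{0_1}$ with $f(0)=1$, $f(1)=b$, $b\in F$ — note one must allow $b=-1$ as well here, which is the boundary between the two regimes and is in fact subsumed by $R_{0_2}$ with "others" all zero, so listing it in $R_{0_1}$ with $b\neq 0$ arbitrary is consistent.

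Next I would split the case $f(0)+f(1)=0$ into $f(0)=-f(1)\neq 0$ and $f(0)=f(1)=0$. For $f(0)=-f(1)\neq 0$, normalize via Lemma \mref{lem:rbd2} to $f(0)=-f(1)=1$; then Theorem \mref{thm:thm4} states the dichotomy that either $f(m)=0$ for all $m\neq 0,1$ (which is the $W_{m_0}$-less degenerate member, again absorbed into $R_{0_1}$ with $b=-1$ or displayed as $R_{0_2}$ with empty supporter), or there is a positive integer $m_0$ and $a\in F$ with $a\neq\frac{k-1}{k}$ such that $f$ is supported on $W_{m_0}$ and given by Eq. (\mref{eq:rf38}); this is exactly $R_{0_2}$. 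For $f(0)=f(1)=0$ I would invoke the trichotomy coming from Theorems \mref{thm:thm01}, \mref{thm:thm02} and \mref{thm:thm03}: the set $\{m: f(m)\neq 0\}$ is either a single point $m_1$ (giving $R_{0_4}$ after normalizing $f(m_1)=1$), or a pair $\{m_1,1-m_1\}$ (giving $R_{0_5}$ with $f(m_1)=1$, $f(1-m_1)=b\neq 0$), or infinite, in which case Theorem \mref{thm:thm02} pins the support to be $W_{m_0,s_0}$ and Theorem \mref{thm:thm03} gives the explicit values Eq. (\mref{eq:rf03}), i.e. $R_{0_3}$. The converse direction — that each of $R_{0_1},\dots,R_{0_5}$ genuinely is a homogeneous Rota-Baxter operator — is already contained in the "if" parts of Theorems \mref{thm:thm1}, \mref{thm:thm4}, \mref{thm:thm01}, \mref{thm:thm03}, so nothing new is required.

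Concretely, the write-up has two halves. In the forward half I would write: "Let $R$ be a homogeneous Rota-Baxter operator of weight $0$ on $A_\omega$, with associated $f$ as in Eq. (\mref{eq:rf31}). If $f(0)+f(1)\neq 0$, then by Theorem \mref{thm:thm1} $f(m)=0$ for $m\neq 0,1$, and after scaling by Lemma \mref{lem:rbd2} we get $R=R_{0_1}$. If $f(0)+f(1)=0$ and $f(0)\neq 0$, scale so $f(0)=-f(1)=1$; Theorem \mref{thm:thm4} gives either $R=R_{0_1}$ (with $b=-1$) or $R=R_{0_2}$. If $f(0)=f(1)=0$, then by Theorems \mref{thm:thm01}, \mref{thm:thm02}, \mref{thm:thm03} and Lemma \mref{lem:rbd2}, $R$ is one of $R_{0_3}$, $R_{0_4}$, $R_{0_5}$." In the backward half I would simply cite the "if" directions of the same theorems, possibly with a one-line direct check via Eq. (\mref{eq:rf32}) for the finitely-supported cases. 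I do not expect any serious obstacle; the only delicate points are cosmetic consistency of the parameter ranges across the five families — in particular making sure the degenerate operators "$f\equiv 0$ off $\{0,1\}$" are counted exactly once (I would fold them into $R_{0_1}$ and remark that they also appear as empty-supporter limits of $R_{0_2}$), and confirming that the normalizations $f(0)=1$, $f(m_1)=1$, $f(2s_0)=1$ used in the sub-results do not lose generality, which is precisely the content of Lemma \mref{lem:rbd2}. So the main "work" is organizational: a clean case tree keyed on the value of $(f(0),f(1))$ and on the cardinality and shape of the support of $f$.
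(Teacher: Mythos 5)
Your proposal is correct and follows essentially the same route as the paper: the paper's own proof is a one-line citation of Theorems \ref{thm:thm1}, \ref{thm:thm4}, \ref{thm:thm01} and \ref{thm:thm03} (together with Theorem \ref{thm:thm02} and the rescaling Lemma \ref{lem:rbd2}), i.e.\ exactly the case tree on $(f(0),f(1))$ and on the support of $f$ that you describe. Your additional remarks on how the degenerate cases and the normalization constants are absorbed into the five families are more careful than what the paper records, but they do not change the argument.
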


\begin{proof}  The result follows from  Theorem \mref{thm:thm1}, Theorem \mref{thm:thm4}, Theorem \mref{thm:thm01} and Theorem \mref{thm:thm03}.
\end{proof}

For convenience, denote $\lambda_k=ka-(k-1),$ for all $k\in Z$, $k\neq \frac{k-1}{k}$, and the multiplication $[ , , ]_{R_{0_i}}$ defined as Eq. (\mref{eq:rf05})  by $[ , , ]_i$,  $1\leq i\leq 5$.
Then we obtain $3$-Lie algebras $(A, [ , , ]_{i})$ with the homogeneous Rota-Baxter operators $R_{0_i}$ for $1\leq i\leq 5$, where $A=A_{\omega}$ as vector spaces.
And  we omit
the zero product of basis vectors in the multiplication of $3$-Lie algebras $[A, [  , , ]_i])$, for $1\leq i\leq 5$.

\vspace{2mm}1) $([A, [ , , ]_1)$ with the multiplication

\vspace{2mm}$[L_0, L_1, L_m]_1=c(2m-1+(-1)^m)L_m$, for all $m\in Z, m\neq 0, 1$, $b\in F$, $b\neq 0$.

\vspace{2mm}2) $([A, [ , , ]_2)$ with the multiplication

\vspace{2mm}$ {[}L_{0},L_{1},L_{2m}{]}_2=-4mL_{2m},$

 \vspace{2mm} ${[}L_{0},L_{1},L_{2m+1}{]}_2=-4mL_{2m+1}, $

 \vspace{2mm}  ${[}L_{0},L_{1-2m_{0}k_{1}},L_{2m}{]}_2=\frac{-4m}{\lambda_{k_1}}L_{2m-2m_{0}k_{1}}, $

  \vspace{2mm}${[}L_{0},L_{2m_{0}k_{1}},L_{2m+1}{]}_2=-\frac{4m_0k_1}{\lambda_{k_1}}L_{2m+2m_{0}k_{1}}, $

    \vspace{2mm} $  {[}L_{1},L_{2m_{0}k_{1}},L_{2m}{]}_2=-\frac{4m_0k_1-4m}{\lambda_{k_1}}L_{2m+2m_{0}k_{1}}, $

  \vspace{2mm}  $  {[}L_{1},L_{2m_{0}k_{1}},L_{2m+1}{]}_2=\frac{4m}{\lambda_{k_1}}L_{2m+2m_{0}k_{1}+1}, $

  \vspace{2mm} $  {[}L_{1},L_{1-2m_{0}k_{1}},L_{2m}{]}_2=-\frac{4m_0k_1}{\lambda_{k_1}}L_{2m-2m_{0}k_{1}+1},$

  \vspace{2mm} $ {[}L_{0},L_{2m_{0}k_{1}},L_{1-2m_{0}k_{2}}{]}_2=\frac{-4m_0k_1(\lambda_{k_2}-\lambda_{k_1}-1)}{\lambda_{k_1}\lambda_{k_2}}L_{2m_{0}
  (k_{1}-k_{2})},$

\vspace{2mm}  $
  {[}L_{0},L_{1-2m_{0}k_{1}},L_{1-2m_{0}k_{2}}{]}_2=\frac{4m_0(k_1-k_2)(-\lambda_{k_2}-\lambda_{k_1}+1)}{\lambda_{k_1}\lambda_{k_2}}
  L_{-2m_{0}(k_{1}+k_{2})+1}, $

  \vspace{2mm}$  {[}L_{0},L_{1-2m_{0}k_{1}},L_{2m+1}{]}_2=-\frac{4m+4m_0k_1}{\lambda_{k_1}}L_{2m-2m_{0}k_{1}+1}, $

\vspace{2mm}  $
  {[}L_{1},L_{2m_{0}k_{1}},L_{1-2m_{0}k_{2}}{]}_2=\frac{4m_0k_2(\lambda_{k_1}-\lambda_{k_2}-1)}{\lambda_{k_1}\lambda_{k_2}}
  L_{2m_{0}(k_{1}-k_{2})+1}, $

 \vspace{2mm}$ {[}L_{1},L_{2m_{0}k_{1}},L_{2m_{0}k_{2}}{]}_2=\frac{4m_0(k_1-k_2)(-\lambda_{k_2}-\lambda_{k_1}+1)}{\lambda_{k_1}\lambda_{k_2}}
  L_{2m_{0}(k_{1}+k_{2})}, $

 \vspace{2mm} $
  {[}L_{2m_{0}k_{1}},L_{1-2m_{0}k_{2}},L_{2m}{]}_2=-\frac{4m-4m_0k_1}{\lambda_{k_1}\lambda_{k_2}}L_{2m+2m_{0}(k_{1}-k_{2})},$

 \vspace{2mm} $
  {[}L_{2m_{0}k_{1}},L_{1-2m_{0}k_{2}},L_{2m+1}{]}_2=-\frac{4m+4m_0k_2}{\lambda_{k_1}\lambda_{k_2}}L_{2m+2m_{0}(k_{1}-k_{2})+1}, $

\vspace{2mm}  $
  {[}L_{2m_{0}k_{1}},L_{2m_{0}k_{2}},L_{1-2m_{0}k_{3}}{]}_2=\frac{4m_0(k_1-k_2)(\lambda_{k_3}-\lambda_{k_2}-\lambda_{k_1})}
  {\lambda_{k_1}\lambda_{k_2}\lambda_{k_3}}
  L_{2m_{0}(k_{1}+k_{2}-k_{3})}, $

\vspace{2mm}  $
  {[}L_{2m_{0}k_{1}},L_{1-2m_{0}k_{2}},L_{1-2m_{0}k_{3}}{]}_2=\frac{4m_0(k_2-k_3)(\lambda_{k_1}-\lambda_{k_2}-\lambda_{k_3})}
  {\lambda_{k_1}\lambda_{k_2}\lambda_{k_3}}L_{2m_{0}(k_{1}-k_{2}-k_{3})+1}, $

\vspace{2mm} $ {[}L_{2m_{0}k_{1}},L_{2m_{0}k_{2}},L_{2m+1}{]}_2=\frac{4m_0(k_1-k_2)}
  {\lambda_{k_1}\lambda_{k_2}}L_{2m+2m_{0}(k_{1}+k_{2})}, $

\vspace{2mm} $
  {[}L_{1-2m_{0}k_{1}},L_{1-2m_{0}k_{2}},L_{2m}{]}_2=\frac{4m_0(k_1-k_2)}
  {\lambda_{k_1}\lambda_{k_2}}L_{2m-2m_{0}(k_{1}+k_{2})+1}, $
\\
for all $~2m+1, 2m\in Z$ and $2m, 2m+1 \notin W_{m_{0}}$,  where $m_0\in Z$, $m_0 > 0$,

\vspace{2mm} 2) $([A, [ , , ]_3)$ with the multiplication

\vspace{2mm}$[L_{2m_{0}k_{1}+2s_{0}},L_{2m_{0}k_{2}+2s_{0}},L_{1-2m_{0}k_{3}-2s_{0}}{]}_3
    =\frac{4m_0(k_1-k_2)(\lambda_{k_{3}}-\lambda_{k_{2}}-\lambda_{k_{1}})}{\lambda_{k_1}\lambda_{k_2}\lambda_{k_3}}L_{2m_{0}(k_{1}+k_{2}-k_{3})+2s_{0}},$

  \vspace{2mm}$[L_{2m_{0}k_{1}+2s_{0}},L_{1-2m_{0}k_{2}-2s_{0}},L_{1-2m_{0}k_{3}-2s_{0}}{]}_3=
  \frac{4m_0(k_2-k_3)(\lambda_{k_{1}}-\lambda_{k_{2}}-\lambda_{k_{3}})}{\lambda_{k_1}\lambda_{k_2}\lambda_{k_3}}L_{2m_{0}(k_{1}-k_{2}-k_{3})-2s_{0}+1},$

\vspace{2mm}$[L_{2m_{0}k_{1}+2s_{0}},L_{1-2m_{0}k_{2}-2s_{0}},L_{2m+1}{]}_3  =-\frac{4(m+m_0k_2+s_0)}{\lambda_{k_1}\lambda_{k_2}}L_{2m+2m_{0}(k_{1}-k_{2})+1},$

\vspace{2mm}$
  [L_{2m_{0}k_{1}+2s_{0}},L_{1-2m_{0}k_{2}-2s_{0}},L_{2m}{]}_3
  =\frac{4(m-m_0k_1-s_0)}{\lambda_{k_1}\lambda_{k_2}}L_{2m+2m_{0}(k_{1}-k_{2})},$

\vspace{2mm}$
  [L_{2m_{0}k_{1}+2s_{0}},L_{2m_{0}k_{2}+2s_{0}},L_{2m+1}{]}_3
  =\frac{4m_0(k_1-k_2)}{\lambda_{k_1}\lambda_{k_2}}L_{2m+2m_{0}(k_{1}+k_{2})+4s_{0}},$

\vspace{2mm}$  [L_{1-2m_{0}k_{1}-2s_{0}},L_{1-2m_{0}k_{2}-2s_{0}},L_{2m}{]}_3
  =\frac{4m_0(k_1-k_2)}{\lambda_{k_1}\lambda_{k_2}}L_{2m-2m_{0}(k_{1}+k_{2})-4s_{0}+1},$
  \\ for all $~2m+1, 2m\in Z$ and $2m, 2m+1 \notin W_{m_{0}, s_0}$, where $m_0, s_0\in Z$, $1\leq s_0 < m_0$.

\vspace{2mm} 4) $([A, [ , , ]_4)$  ia an abelian algebras.

\vspace{2mm} 6) $([A, [ , , ]_5)$ with the multiplication

\vspace{2mm}$[L_{m_1}, L_{1-m_1}, L_m]_5=bD(m_1, 1-m_1, m)L_m$, for all $m\in Z$, $m\neq m_1$,  where  $m_1\in Z$, $m_1\neq 0, 1, $  $b\in F$,  $b\neq 0$.

\noindent
{\bf Acknowledgements. } The first author (R.-P. Bai) was supported in part by the Natural
Science Foundation (11371245) and the Natural
Science Foundation of Hebei Province (A2014201006).

\bibliography{}

\end{document}